\let\origvec\vec
\let\vec\origvec
\newcommand{\DB}{\ensuremath{\mathcal{D}}}
\newcommand{\RR}{\ensuremath{\mathbb{R}}}
\newcommand{\NN}{\ensuremath{\mathbb{N}}}
\title{Uncertain Spatial Data Management:\\ An Overview}
\begin{document}

\author{Andreas Z\"ufle}

\institute{
George Mason University, Fairfax, VA, USA\\
Department of Geography and Geoinformation Science\\
\email{azufle@gmu.edu}\\
{\color{red}{Preprint: To Appear in Big Geospatial Data. Chapter 3.2. Springer Books.}}
}

\maketitle

{\abstract{Both the current trends in technology such as smart phones, general mobile devices, stationary sensors, and satellites as we as a new user mentality of using this technology to voluntarily share enriched location information produces a flood of geo-spatial and geo-spatio-temporal data. This data flood provides a tremendous potential of discovering new and useful knowledge. But in addition to the fact that measurements are imprecise, spatial data is often interpolated between discrete observations. To reduce communication and bandwidth utilization, data is often subjected to a reduction, thereby eliminating some of the known/recorded values. These issues introduce the notion of uncertainty in the context of spatio-temporal data management, an aspect raising imminent need for scalable and flexible solutions. The main scope of this chapter is to survey existing techniques for managing, querying, and mining uncertain spatio-temporal data. First, this chapter surveys common data representations for uncertain data, explains the commonly used possible worlds semantics to interpret an uncertain database, and surveys existing system to process uncertain data. Then this chapter defines the notion of different probabilistic result semantics to distinguish the task of enrich individual objects with probabilities rather than enriched entire results with probabilities. To distinguish between result semantics is important, as for many queries, the problem of computing object-level result probabilities can be done efficiently, whereas the problem of computing probabilities of entire results is often exponentially hard. Then, this chapter provides an overview over probabilistic query predicates to quantify the required probability of a result to be included in the result.
Finally, this chapter introduces a novel paradigm to efficiently answer any kind of query on uncertain data: the Paradigm of Equivalent Worlds, which groups the exponential set of possible database worlds into a polynomial number of set of equivalent worlds that can be processed efficiently. Examples and use-cases of querying uncertain spatial data are provided using the example of uncertain range queries. 
}}
\clearpage

\section{Introduction}\vspace{-0.2cm}

Due to the proliferation of handheld GPS enabled devices, spatial and spatio-temporal data is generated, stored, and published by billions of users in a plethora of applications. By mining this data, and thus turning it into actionable information, The McKinsey Global Institute projects a ``\$600 billion potential annual consumer surplus from using personal location data globally''.

As the volume, variety and velocity of spatial data has increased sharply over the last decades, uncertainty has increased as well. Until the early 21st century, spatial data available for geographic information science (GIS) was mainly collected, curated, standardized \cite{fegeas1992overview,FGDC}, and published by authoritative sources such as the United States Geological Survey (USGS) \cite{USGS}. Now, data used for spatial data mining is often obtained from sources of volunteered geographic information (VGI) \cite{sui2012crowdsourcing,OSM}.
Consequentially, our ability to unearth valuable knowledge from large sets of such spatial data is often impaired by the uncertainty of the data which geography has been named the ``the Achilles heel of GIS'' \cite{goodchild1998uncertainty} for many reasons:
\begin{itemize}
\item Imprecision is caused by physical limitations of sensing devices and connection errors, for instance in geographic information system using cell-phone GPS \cite{couclelis2003certainty},\vspace{-0.0cm}
\item Data records may be obsolete. In geo-social networks and microblogging platforms such as Twitter, users may update their location infrequently, yielding uncertain location information in-between data records \cite{kumar2014twitter},\vspace{-0.0cm}
\item Data can be obtained from unreliable sources, such as volunteered geographic information like data in Open-Street-Map \cite{OSM}, where data is obtained from individual users, which may incur inaccurate or plain wrong data, deliberately or due to human error \cite{grira2010spatial},\vspace{-0.0cm}
\item Data sets pertaining to specific questions may be too small to answer questions reliably. Proper statistical inference is required to draw significant conclusions from the data and to avoid basing decisions upon spurious mining results \cite{hsu1996multiple,casella2002statistical}.\vspace{-0.0cm}
\end{itemize}

\begin{figure}[t]
    \centering
    \includegraphics[width=0.8\textwidth]{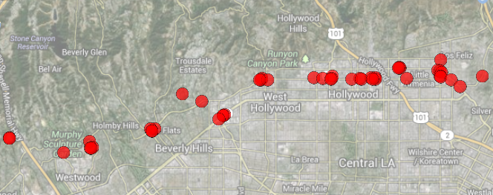}\vspace{-0.2cm}
    \caption{User locations in a Location-based social network (Gowalla) over a day. }\vspace{-0.4cm}
    \label{fig:uncertainty}
\end{figure}

\noindent To illustrate uncertainty in spatial and spatio-temporal data, Figure~\ref{fig:uncertainty} shows a typical one-day ``trajectory'' of a prolific user in the location-based social network Gowalla (data taken from \cite{cho2011friendship}). While a trajectory is usually defined as a function that continuously maps time to locations, we see that in this case, we can only observe the user at discrete times, having hours in-between subsequent location updates. Where was the user located in-between these updates? Should we use dead reckoning techniques to interpolate the locations or should be assume that the user stays at a location until next update? Also, users may spoof their location~\cite{zhao2017true}, either to protect their privacy or to gain advantages within the location-based social network. Given this uncertainty, how certain can we be about the location of the user at a given time $t$? And how does the uncertainty increase as location updates become more sparse and obsolete?
The goal of this chapter is to provide a comprehensive overview of models and techniques to deal with uncertainty. 
To handle uncertainty, we must first remind ourselves that a database models an aspect of the real world, the universe of discourse. Information observed and stored in a database may deviate from the real-world. For reliable decision making, we need to quantify the uncertainty of attribute values stored in the database and consider potentially missing objects that may change mining results.
%
%

\begin{figure}[t]
    \centering
    \includegraphics[width=0.8\textwidth]{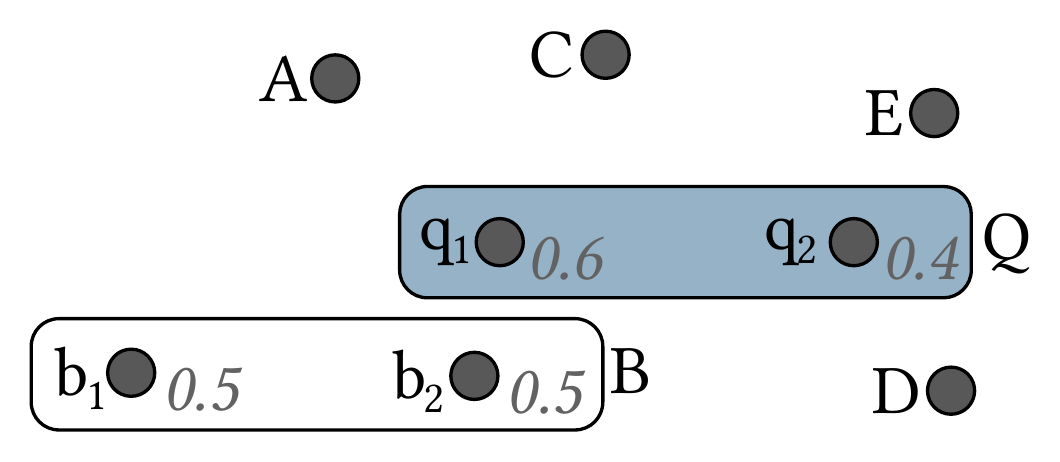}\vspace{-0.3cm}
    \caption{Exemplary Uncertain Database.}\vspace{-0.4cm}
    \label{fig:toy}
\end{figure}

\begin{example}
As a running example used through this chapter, consider Figure~\ref{fig:toy} which shows a toy uncertain spatial database. In this example, two objects, $Q$ and $B$ have uncertain locations, indicated by alternative locations $\{q_1, q_2\}$ of $Q$ and alternative locations $\{b_1,b_2\}$ of $B$. 
In this book chapter, we will survey methods to answer questions such as ``What object is closest to Q?'', or ``What is the probability of $B$ to be one of the two-nearest neighbors of $Q$?''
\end{example}
To answer such queries, we first need a crisp definition of what it means for an uncertain object to be a (probabilistic) nearest neighbor of a query object and how the probability of such an event is defined. This chapter gives a widely used interpretation of uncertain databases using \emph{Possible Worlds Semantics}. This interpretation allows to answer arbitrary queries on uncertain data, but at a computational cost exponential in the number of uncertain objects. For efficient processing, this chapter defines a paradigm of querying uncertain data that allows to efficiently answer many spatial queries on uncertain spatial data. 

managing and querying uncertain spatial data.
Parts of this section have been
presented in the form of presentation slides at recent conference tutorials at VLDB 2010 (\cite{RenCheKriZueetal10}), ICDE 2014 (\cite{cheng2014managing}), ICDE 2017 (\cite{zufle2017handling}), and MDM 2020~\cite{zufle2020managing}. This chapter is
subdivided to give a survey of
definitions, notions and techniques used in the field of querying
and mining uncertain spatio-temporal data.
\begin{itemize}
\item {\bf Section
\ref{subsec:probmodels}} presents a survey of state-of-the-art
\emph{data representations models} used in the field of uncertain
data management. This section explain discrete and continuous models for uncertain objects. 
\item To interpret queries on a database of uncertain objects,
well-defined semantics of uncertain database are required. For this purpose,
{\bf Section \ref{sec:pws}} introduces the \emph{possible world
semantics} for uncertain data.
\item To run queries on uncertain spatial data, existing systems for uncertain spatial database management are surveyed in {\bf Section~\ref{sec:systems}}.
\item Given an uncertain database,
the result of a probabilistic query can be interpreted in two ways
as elaborated in {\bf Section \ref{subsec:ProbAnswerSem}}. This distinction between different \emph{probabilistic result semantics} is not made explicitly in any related work, but is required to gain a deep understanding of problems in the field of querying uncertain spatial data and their complexity. 

\item {\bf Section~\ref{sec:probpred}} gives an overview over \emph{probabilistic
query predicates}. A probabilistic query predicate defines the
requirements for the probability of a candidate result to be returned as a query result. 

\item {\bf Section \ref{part:paradigm}} introduces a novel paradigm for
uncertain data to efficiently answer any kind of query using
possible world semantics. This \emph{Paradigm of Equivalent
Worlds} generalizes existing solutions by identifying requirements
a query must satisfy in order to have a polynomial solution. 
\item {\bf Section
\ref{chap:sumofindependent}} presents efficient solutions for the
problem of computing range queries on uncertain spatial databases. For this purpose, the paradigm of equivalent worlds is leveraged to compute the distribution of the sum of a Poisson-binomial distributed
random variable, a problem that is paramount for many spatial queries on uncertain data. 
\item {Section~\ref{ref:overview}} gives an overview of specific research problems using uncertain spatial and spatio-temporal data, and surveys state-of-the-art solutions.
\item Finally, {\bf Section~\ref{sec:summary}} concludes this book chapter and sketches future research directions that can be opened by leveraging the Paradigm of Equivalent Worlds to new applications and query types.
\end{itemize}
\clearpage

\begin{figure}[t]
    \centering
    \subfigure[Discrete Probability Mass Function]{
        \label{fig:disc}
        \includegraphics[width = 0.47\columnwidth]{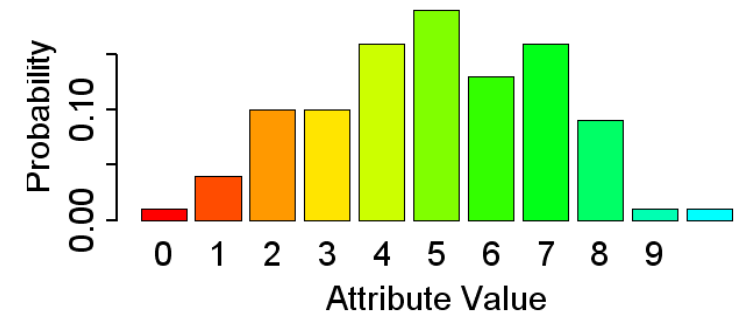}
    }
    \subfigure[Continuous Prob. Density Function]{
        \label{fig:cont}
        \includegraphics[width = 0.47   \columnwidth]{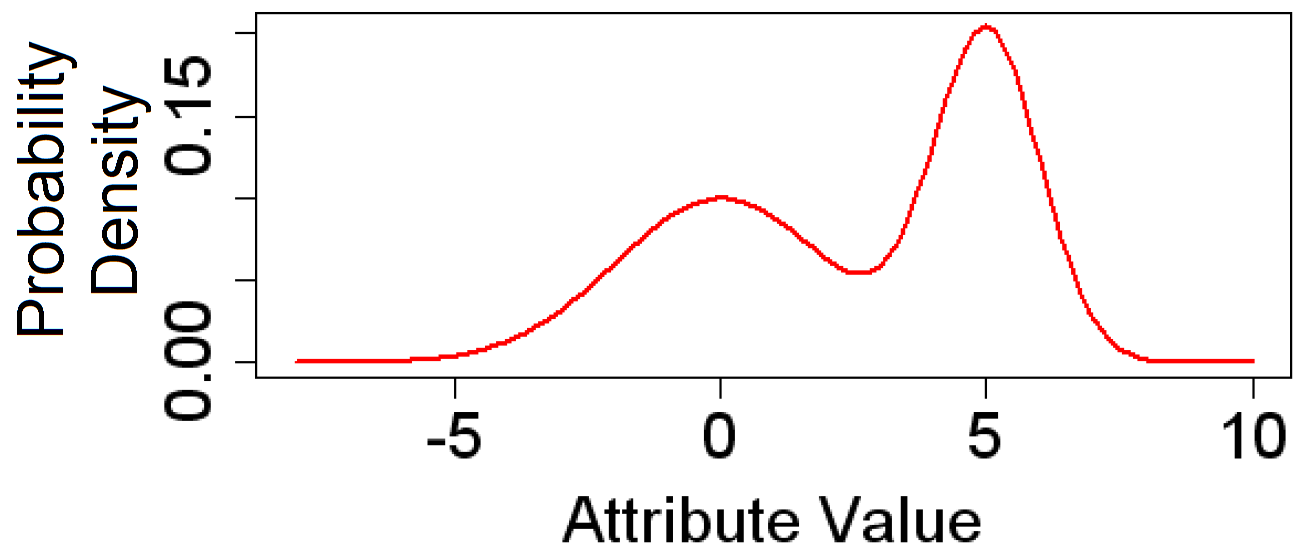}
    }\vspace{-0.2cm}
    \caption{Models for Uncertain Attributes}\vspace{-0.2cm}
\end{figure}

\section{Discrete and Continuous Models for Uncertain Data}\label{subsec:probmodels}\vspace{-0.3cm}
An object is uncertain if at least one attribute of $o$ is
uncertain. The uncertainty of an attribute can be captured in a
discrete or continuous way. A discrete model uses a probability
mass function (pmf) to describe the location of an uncertain
object. In essence, such a model describes an uncertain object by
a finite number of alternative instances, each with an associated
probability \cite{KriKunRen07,PeiHuaTaoLin08}, as shown in Figure
\ref{fig:disc}. In contrast, a continuous model uses a continuous
probability density function (pdf), like Gaussian, uniform,
Zipfian, or a mixture model, as depicted in Figure \ref{fig:cont},
to represent object locations over the space. Thus, in a
continuous model, the number of possible attribute values is
uncountably infinite. In order to estimate the probability that an
uncertain attribute value is within an interval, integration of
its pdf over this interval is required \cite{TaoCheXiaNgaetal05}.
The random variables corresponding to each uncertain attribute of
an object $o$ can be arbitrarily correlated.

To capture positional uncertainty, such models can be applied by
treating longitude and latitude (and optionally elevation) as two
(three) uncertain attributes. In the case of discrete positional
 uncertainty, the position of an object $A$ is given by a discrete
set $a_1,...,a_m$ of $m\in \NN$ possible alternatives in space, as
exemplarily depicted in Figure \ref{fig:disc_obj} for two
uncertain objects $A$ and $B$. Each alternative $a_i$ is
associated with a probability value $p(a_i)$, which may for
example be derived from empirical information about the turn
probabilities of intersection in an underlying road network. In a
nutshell, the position $A$ is a random variable, defined by a
probability mass function $\mbox{pdf}_A$ that maps each
alternative position $a_i$ to its corresponding probability
$p(a_i)$, and that maps all other positions in space to a zero
probability. An important property of uncertain spatial databases
is the inherent correlation of spatial attributes. In the example
shown in Figure \ref{fig:disc_obj} it can be observed that the
uncertain attributes $a$ and $b$ are highly correlated: given the
value of one attribute, the other attribute is certain, as there
is no two alternatives of objects $A$ and $B$ having identical
attribute values in either attribute.

Clearly, it must hold that the sum of probabilities of all
alternatives must sum to at most one:
$$
\sum_{i=1}^m p(a_i)\leq 1
$$
In the case where $\sum_{i=1}^m p(a_i)\leq 1$ object $A$ has a
non-zero probability of $1-\sum_{i=1}^m p(a_i)\geq 0$ to not exist
at all. This case is called \emph{existential uncertainty}, and
$A$ is denoted as \emph{existentially uncertain} \cite{YMDTV09}.
If the total number of possible instances $m$ is greater than one,
$A$ is denoted as \emph{attribute uncertain}. In the context of
uncertain spatial data, attribute uncertainty is also referred to
as \emph{positional uncertainty} or \emph{location uncertainty}.
An object can be both existentially uncertain and attribute
uncertain. In Figure \ref{fig:disc_obj}, object $A$ is both
existentially uncertain and attribute uncertain, while object $B$
is attribute uncertain but does exist for certain.

\begin{figure}[t]
    \centering
    \subfigure[Discrete Case]{
        \label{fig:disc_obj}
        \includegraphics[width = 0.45\columnwidth]{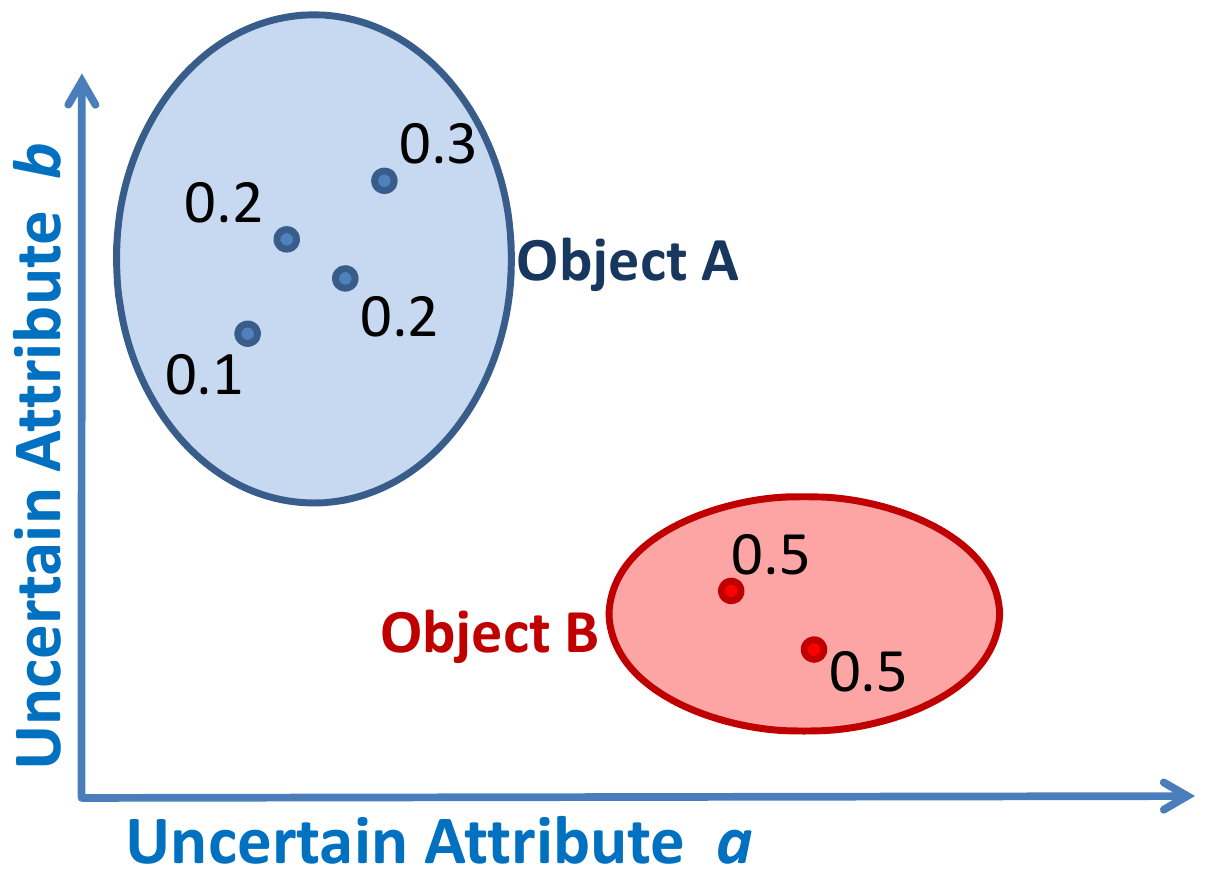}
    }
    \subfigure[Continuous Case]{
        \label{fig:cont_obj}
        \includegraphics[width = 0.45\columnwidth]{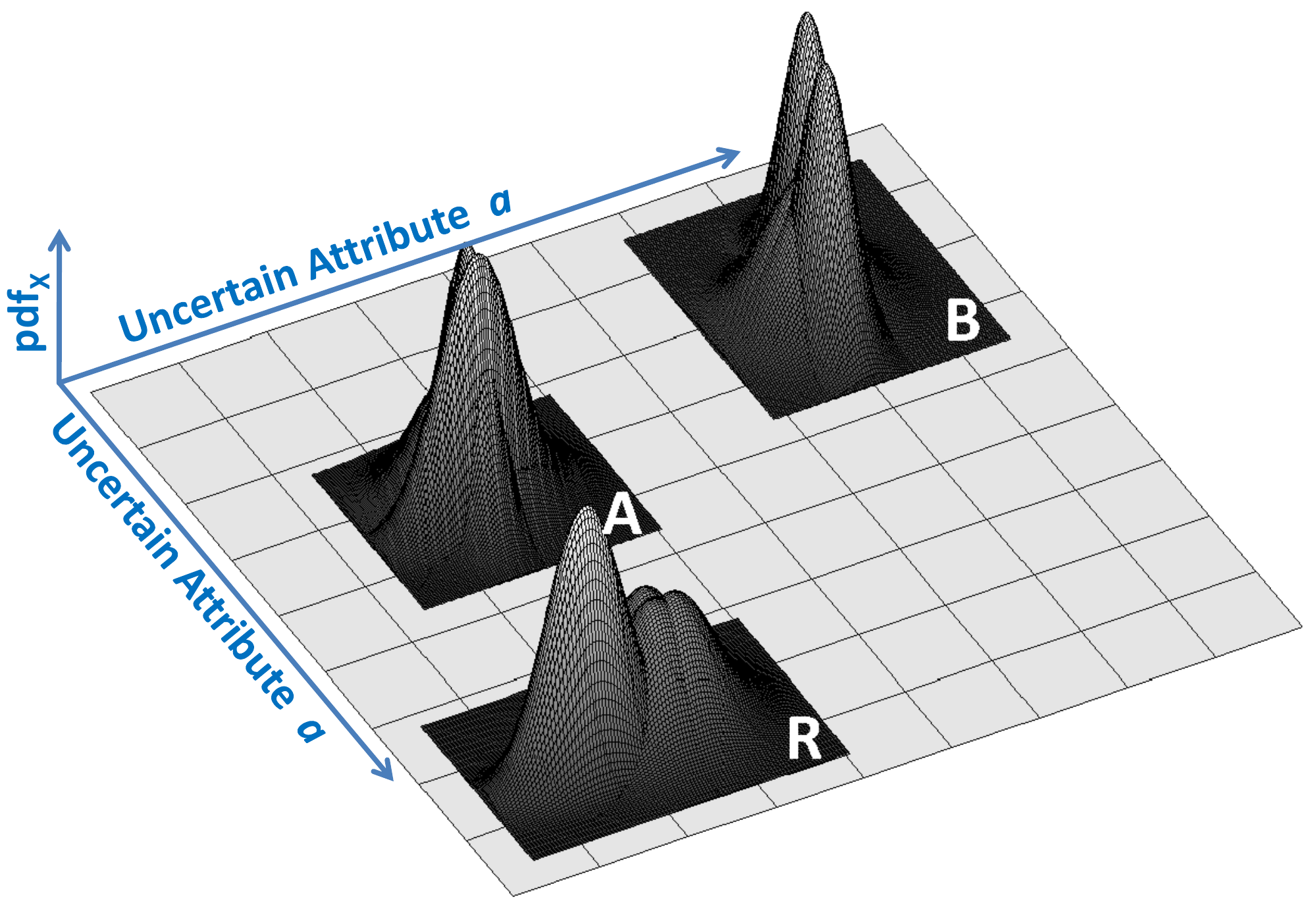}
    }
    \caption{Uncertain Objects}
\end{figure}

In the case of continuous uncertainty, the number of possible
alternative positions of an object $A$ is infinite, and given by
the non-zero domain of the probability density function
$\mbox{pdf}_x$. The probability of $A$ to occur in some spatial
region $r$ is given by integration
$$
\int_r pdf_A(x)dx.
$$
Since arbitrary pdfs may be represented by an uncountably infinite large
number of ($position$, $probability$) pairs, such pdfs may require
infinite space to represent. For this reason, assumptions on the
shape of a pdf are made in practice. All continuous models for
positionally uncertain data therefore use parametric pdfs, such as
Gaussian, uniform, Zipfian, mixture models, or parametric
spline representations. For illustration purpose, Figure
\ref{fig:cont_obj} depicts three uncertain objects modelled by a
mixture of gaussian pdfs. Similar to the discrete case, the
constraint
$$
\int_{\RR^d} pdf_A(x) dx\leq 1
$$
must be satisfied, where $\RR^d$ is a $d$ dimensional vector
space. In the case of spatial data, $d$ usually equals two or
three. The notion of existentially and attribute uncertain objects
is defined analogous to the discrete case.

 The following section reviews related work and state-of-the-art on
the field of modeling uncertain data.

\subsection{Existing Models for Uncertain
Data}\label{existing_models} This section gives a brief survey on
existing models for uncertain spatial data used in the database
community. Many of the presented models have been developed to
model uncertainty in relational data, but can be easily adapted to
model uncertain spatial data. Since one of the main challenges of
modeling uncertain data is to capture correlation between
uncertain objects, this section will elaborate details on how
state-of-the-art approaches tackles this challenge. Both discrete
and continuous models are presented.

\subsection*{Discrete Models} In addition to reviewing related
work defining discrete uncertainty models, the aim of this section
is to put these papers into context of Section
\ref{subsec:probmodels}. In particular, models which are special
cases or equivalent to the model presented in Section
\ref{subsec:probmodels} will be identified, and proper mappings to
Section \ref{subsec:probmodels} will be given.

{\bf Independent Tuple Model.} Initial models have been proposed
simultaneously and independently in \cite{Fuhr97,Zimanyi97}.
These works assume a relational model in which each tuple is
associated with a probability describing its existential
uncertainty. All tuples are considered independent from each other.
This simple model can be seen as a special case of the model
presented in Section \ref{subsec:probmodels}, where only
existential uncertain but no attribute uncertainty is modelled.

{\bf Block-Independent Disjoint Tuples Model and X-Tuple model} A
more recent and the currently most prominent approach to model
discrete uncertainty is the block-independent disjoint tuples
model (\cite{Dalvi09}), which can capture mutual exclusion between
tuples in uncertain relational databases. A probabilistic database
is called block independent-disjoint if the set of all possible
tuples can be partitioned into blocks such that tuples from the
same block are disjoint events, and tuples from distinct blocks
are independent. A commonly used example of a block-independent
disjoint tuples model is the \emph{Uncertainty-Lineage Database
Model}(\cite{Benjelloun06,Sarma06,Soliman07,YiLKS08,YiLiKolSri08a}), also called \emph{X-Relation Model} or
simply \emph{X-Tuple Model} that has been developed for relational
data. In this model, a probabilistic database is a finite set of
\emph{probabilistic tables}. A probabilistic table $T$ contains a
set of (uncertain) tuples, where each tuple $t\in T$ is associated
with a membership probability value $Pr(t) > 0$. A
\emph{generation rule} $R$ on a table $T$ specifies a set of
mutually exclusive tuples in the form of $R : t_{r_1} \oplus ...
\oplus t_{r_m}$ where $t_{r_i} \in T (1\leq i\leq m)$ and
$P(R):=\sum_{i=1}^m t_{r_i}\leq 1$. The rule R constrains that,
among all tuples $t_{r_1},...,t_{r_m}$ involved in the rule, at
most one tuple can appear in a possible world. The case where
$P(R)< 1$ the probability $1-P(R)$ corresponds to the probability
that no tuple contained in rule $R$ exists. It is assumed that for
any two rules $R_1$ and $R_2$ it holds that $R_1$ and $R_2$ do not
share any common tuples, i.e., $R_1\cap R_2=\emptyset$. In this
model, a possible world $w$ is a subset of $T$ such that for each
generation rule $R$, $w$ contains exactly one tuple involved in
$R$ if $P(R) = 1$, or $w$ contains 0 or 1 tuple involved in $R$ if
$Pr(R) < 1$.

This model can be translated to a discrete model for uncertain
spatial data as discussed in Section \ref{subsec:probmodels} by
interpreting the set $T$ as the set of all possible locations of
all objects, and interpreting each rule $R$ as an uncertain
spatial object having alternatives $t_{r_i}$. The constraint that
no two rules may share any common tuples translates into the
assumption of mutually independent spatial objects. Finally, the
case $P(R)<1$ corresponds to the case of existential uncertainty
(see Section \ref{subsec:probmodels}).

A similar block-independent disjoint tuples model is called
\emph{p-or-set} \cite{Re06} and can be translated to the model
described in Section \ref{subsec:probmodels} analogously. In
\cite{AntJanKocOlt07}, another model for uncertainty in relational
databases has been proposed that allows to represent attribute
values by sets of possible values instead of single deterministic
values. This work extends relational algebra by an operator for
computing possible results. A normalized representation of
uncertain attributes, which essentially splits each uncertain
attribute into a single relation, a so-called U-relation, allows
to efficiently answer projection-selection-join queries. The main
drawback of this model is that it is not possible to compute
probabilities of the returned possible results. Sen and Deshpande
\cite{SenDes07} propose a model based on a probabilistic graphical
model, for explicitly modeling correlations among tuples in a
probabilistic database. Strategies for executing SQL queries over
such data have been developed in this work. The main drawback of
using the proposed graphical model is its complexity, which grows
exponential in the number of mutually correlated tuples. This is a
general drawback for graphical models such as Bayesian networks
and graphical Markov models, where even a \emph{factorized
representation} may fail to reduce the complexity sufficiently:
The idea of a factorized representation is to identify conditional
independencies. For example, if a random variable $C$ depends on
random variables $A$ and $B$, then the distribution of $C$ has to
be given relative to all combination of realizations of $A$ and
$B$. If however, $C$ is conditionally independent of $A$, i.e.,
$B$ depends on $A$, $C$ depends on $B$, and $C$ only transitively
depends on $A$, then it is sufficient to store the distribution of
$C$ relative only to the realizations of $B$. Nevertheless, if for
a given graphical model a random variable depends on more than a
hand-full of other random variables, then the corresponding model
will become infeasible.

{\bf And/Xor Tree Model.} A very recent work by Li and Deshpande
\cite{LiD09} extends the block-independent disjoint tuples model
by adding support for mutual co-existence. Two events satisfy the
mutual co-existence correlation if in any possible world, either
both happen or neither occurs. This work allows both mutual
exclusiveness and mutual co-existence to be specified in a
hierarchical manner. The resulting tree structure is called an
\emph{and/xor tree}. While theoretically highly relevant, the
and/xor tree model becomes impracticable in large database having
non-trivial object dependencies, as it grows exponentially in the
number of database objects.

If not stated otherwise, this chapter will apply the
block-independent disjoint tuples model as model of choice for
discrete uncertain data.

\clearpage
\subsection*{Continuous Models} In general, similarity search
methods based on continuous models involve expensive integrations
of the PDFs, hence special approximation and indexing techniques
for efficient query processing are typically employed
\cite{CheXiaPraShaVit04,TaoCheXiaNgaetal05}. In order to increase
quality of approximations, and in order to reduce the
computational complexity, a number of models have been proposed
making assumptions on the shape of object PDFs. Such assumptions
can often be made in applications where the uncertain values
follow a specific parametric distribution, e.g. a uniform
distribution \cite{ChengKP03,ChengCMC08} or a Gaussian
distribution \cite{ChengCMC08,DeshpandeGMHH04,PatroumpasPS12}.
Multiple such distributions can be mixed to obtain a mixture model
\cite{Tran10,BoePrySch06}. To approximate arbitrary PDFs,
\cite{LiD10} proposes to use polynomial spline approximations.

\section{Possible World Semantics}\label{sec:pws}

In an uncertain spatial database $\DB=\{U_1,...,U_N\}$, the
location of an object is a random variable. Consequently, if there
is at least one uncertain object, the data stored in the database
becomes a random variable. To interpret, that is, to define the
semantics of a database that is, in itself, a random variable, the
concept of \emph{possible worlds} is described in this section.

\begin{definition}[Possible World Semantics]\label{ref:pws}
A possible world $w=\{u_1^{a_1},...,u_N^{a_N}\}$ is a set of
instances containing at most one instance $u_i^{a_i}\in U_i$ from
each object $U_i\in\DB$. The set of all possible worlds is denoted
as $\mathcal{W}$. The total probability of an uncertain world
$P(w\in\mathcal{W})$ is derived from the chain rule of conditional
probabilities:
\begin{equation}\label{eq:pw1}
P(w):=P(\bigwedge_{u_i^{a_i}\in w}U_i=u_i^{a_i})=\prod_{i=1}^{N}
P(u_i^{a_i}|\bigwedge_{j<i}u_j^{a_j}).
\end{equation}
By definition, all worlds $w$ having a zero probability $P(w)=0$
are excluded from the set of possible worlds $\mathcal{W}$.
Equation \ref{eq:pw1} can be used if conditional probabilities of
the position of objects given the position of other objects are
known, e.g. by a given graphical model such as a Bayesian network
or a Markov model. In many applications where independence between
object locations can be assumed, as well as in applications where
only the marginal probabilities $P(u_i^{a_i})$ are known, and thus
independence has to be assumed due to lack of better knowledge of
a dependency model, the above equation simplifies to
\begin{equation}\label{eq:pw2}
P(w)=\prod_{i=1}^{N} P(u_i^{a_i}).
\end{equation}
\end{definition}

\clearpage
\begin{figure}[t]
    \centering
    \includegraphics[width=0.85\columnwidth]{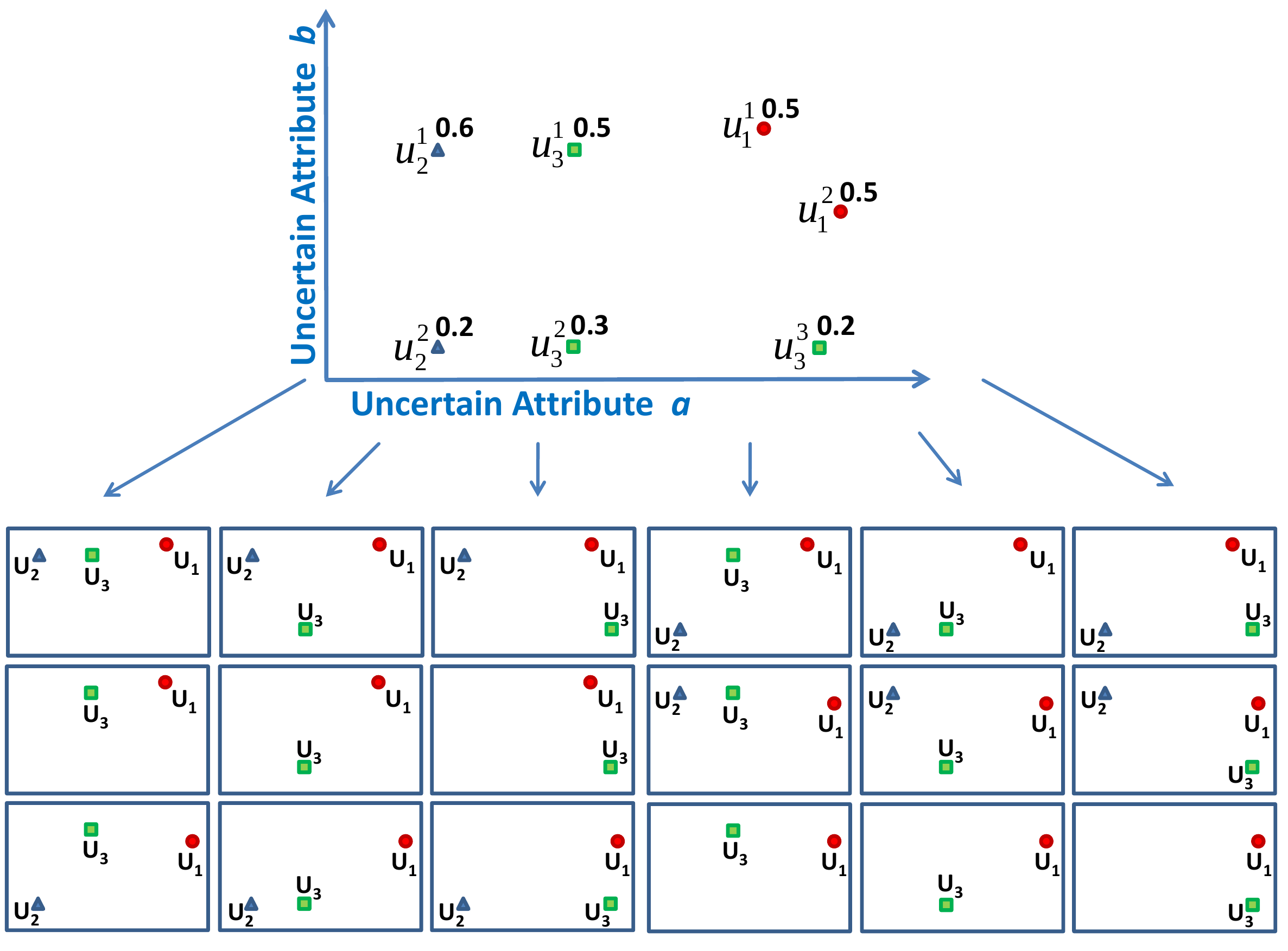}
    \caption{An uncertain database and all of its possible worlds.}
    \label{fig:pws_example}
\end{figure}

\begin{table}[htb]
\centering \caption{Possible worlds corresponding to Figure
\ref{fig:pws_example}.\vspace{-0.3cm}} \label{tab:pws_example}

\begin{tabular}{|c|c||c|c|}
\hline
\textbf{World} & \textbf{Probability} & \textbf{World} & \textbf{Probability}\\
\hline $\{u_1^1,u_2^1,u_3^1\}$ &  $0.5\cdot 0.7 \cdot 0.5=0.175$ &$\{u_1^2,u_2^1,u_3^1\}$ &  $0.5\cdot 0.7 \cdot 0.5=0.175$ \\
\hline $\{u_1^1,u_2^1,u_3^2\}$ &  $0.5\cdot 0.7 \cdot 0.3=0.105$ & $\{u_1^2,u_2^1,u_3^2\}$ &  $0.5\cdot 0.7 \cdot 0.3=0.105$ \\
\hline $\{u_1^1,u_2^1,u_3^3\}$ &  $0.5\cdot 0.7 \cdot 0.2=0.07$ & $\{u_1^2,u_2^1,u_3^3\}$ &  $0.5\cdot 0.7 \cdot 0.2=0.07$ \\
\hline $\{u_1^1,u_2^2,u_3^1\}$ &  $0.5\cdot 0.2 \cdot 0.5=0.05$ & $\{u_1^2,u_2^2,u_3^1\}$ &  $0.5\cdot 0.2 \cdot 0.5=0.05$ \\
\hline $\{u_1^1,u_2^2,u_3^2\}$ &  $0.5\cdot 0.2 \cdot 0.3=0.03$ & $\{u_1^2,u_2^2,u_3^2\}$ &  $0.5\cdot 0.2 \cdot 0.3=0.03$ \\
\hline $\{u_1^1,u_2^2,u_3^3\}$ &  $0.5\cdot 0.2 \cdot 0.2=0.02$ & $\{u_1^2,u_2^2,u_3^3\}$ &  $0.5\cdot 0.2 \cdot 0.2=0.02$ \\
\hline $\{u_1^1,u_3^1\}$ &  $0.5\cdot 0.1\cdot 0.5=0.025$ & $\{u_1^2,u_3^1\}$ &  $0.5\cdot 0.1 \cdot 0.5=0.025$ \\
\hline $\{u_1^1u_3^2\}$ &  $0.5\cdot 0.1 \cdot 0.3=0.015$ & $\{u_1^2,u_3^2\}$ &  $0.5\cdot 0.1 \cdot 0.3=0.015$ \\
\hline $\{u_1^1,u_3^3\}$ &  $0.5\cdot 0.1 \cdot 0.2=0.01$ & $\{u_1^2,u_3^3\}$ &  $0.5\cdot 0.1 \cdot 0.2=0.01$ \\

\hline

\end{tabular}
\vspace{-0.7cm}
\end{table}
\begin{example}
As an example, consider Figure \ref{fig:pws_example} where a
database consisting of three uncertain objects
$\DB=\{U_1,U_2,U_3\}$ is depicted. Objects $U_1=\{u_1^1,u_1^2\}$
and $U_2=\{u_2^1,u_2^2\}$ each have two possible instances, while
object $U_3=\{u_3^1,u_3^2,u_3^3\}$ has three possible instances.
The probabilities of these instances is given as
$P(u_1^1)=P(u_1^2)=0.5$, $P(u_2^1)=0.7$, $P(u_2^2)=0.2$,
$P(u_3^1)=0.5$, $P(u_3^2)=0.3$, $P(u_3^3)=0.2$. Note that object
$U_2$ is the only object having existential uncertainty: With a
probability of $1-0.7-0.2=0.1$ object $U_2$ does not exist at all.
Assuming independence between spatial objects, the probability for
the possible world where $U_1=u_1^1$, $U_2=u_2^1$ and $U_3=u_3^1$
is given by applying Equation \ref{eq:pw2} to obtain the product
$0.5\cdot 0.7 \cdot 0.5=0.175$. All possible worlds spanned by
$\DB$ are depicted in Figure \ref{fig:pws_example}. The
probability of each possible world is shown in Table
\ref{tab:pws_example}, including possible worlds where $U_2$ does
not exist.
\end{example}
Recall that a predicate can evaluate to either true or false on a
crisp (non-uncertain) database. An exemplary predicate is
\emph{There are at least five database objects in a 500meter range
of the location ``Theresienwiese, Munich''.} To evaluate a
predicate $\phi$ on an uncertain database using possible world
semantics, the query predicate is evaluated on each possible
world. The probability that the query predicate evaluates to true
is defined as the sum of probabilities of all worlds where $\phi$
is satisfied, formally:\vspace{-0.1cm}
\begin{definition} \label{def:querypws}
Let $\DB$ be an uncertain spatial database inducing the set of
possible worlds $\mathcal{W}$, let $\phi$ be some query predicate,
and let
$$\mathcal{I}(\phi,w\in\mathcal{W}):=P(\phi(\DB)|\DB=w)\in\{0,1\}$$ be the
indicator function that returns one if world $w$ satisfies $\phi$
and zero otherwise. The marginal probability $P(\phi(\DB))$ of the
event $\phi(\DB)$ that predicate $\phi$ holds in $\DB$ is defined as follows using the theorem of total probability \cite{Zwillinger00}:
\begin{equation}\label{eq:pwprob}
P(\phi(\DB))=\sum_{w\in\mathcal{W}}\mathcal{I}(\phi,w)\cdot P(w)\vspace{-0.1cm}
\end{equation}
\end{definition}
The main challenge of analyzing uncertain data is to efficiently
and effectively deal with the large number of possible worlds
induced by an uncertain database $\DB$. In the case of continuous
uncertain data, the number of possible worlds is uncountably
infinite and expensive integration operations or
numerical approximation are required for most spatial database
queries and spatial data mining tasks. Even in the case of
discrete uncertainty, the number of possible worlds grows
exponentially in the number of objects: in the worst case, any
combination of alternatives of objects may have a non-zero
probability, as shown exemplary in Figure \ref{fig:pws_example}.
This large number of possible worlds makes efficient query
processing and data mining an extremely challenging problem. In
particular, any problem that requires an enumeration of all
possible worlds is \#P-hard\footnote{\#P is the set of counting problems associated with decision problems in the class NP. Thus, for any NP-complete decision problem which asks if there exists a solution to a problem, the corresponding \#P problem asks for the number of such solutions. 
}.
In particular, a number of
probabilistic problems have been proven to be in \#P
\cite{Valiant79}. Following this argumentation, general query
processing in the case of discrete data using object independence
has proven to be a \#P-hard problem \cite{Dalvi04} in the context
of relational data. The spatial case is a specialization of the
relation case, but clearly, the spatial case is in \#P as well,
which becomes evident by construction of a query having an
exponentially large result, such as the query that returns all
possible worlds. Consequently, there can be no universal solution
that allows to answer \emph{any} query in polynomial time. This
implies that querying processing on models that are
generalizations of the discrete case with object independence,
e.g., models using continuous distribution, or models that relax
the object independence assumption, must also be a \#P hard
problem. The result of \cite{Dalvi04} implies that there exists
query predicates, for which no polynomial time solution can be
given. Yet, this result does not outrule the existence of query
predicates that can be answered efficiently. For example the
(trivial) query that always returns the empty set of objects can
be efficiently answered on uncertain spatial databases. 

\clearpage

\section{Existing Uncertain Spatial Database Management Systems}\label{sec:systems}
Recently developed systems provide support for spatio-temporal data in big data systems \cite{akdogan2010voronoi,aji2013hadoop,lu2012efficient,wang2010accelerating,zhang2012efficient}. Such systems exhibit high scalability for batch-processing jobs \cite{hadoop,dean2008mapreduce}, but do not provide efficient solutions to handle uncertain data and to assess the reliability of results.
The vivid field of managing, querying, and mining uncertain data has received tremendous attention from the database, data mining, and spatial data science communities. Recent books \cite{aggarwal2010managing} and survey papers  \cite{aggarwal2008survey,wang2013survey,Li2018Survey} provide an overview of the flurry of research papers that have appeared in these fields. 

been well-studied by the database research community in the past. 
While the traditional database
literature \cite{cavallo1987theory,barbara1992management,bacchus1996statistical,lakshmanan1997probview,fuhr1997probabilistic} has studied the problem of managing
uncertain data, this research field has seen a recent revival, due to modern techniques for
collecting inherently uncertain data. Most prominent concepts for probabilistic data management are MayBMS
\cite{antova2008fast}, MystiQ \cite{boulos2005mystiq}, Trio \cite{agrawal2006trio}, and BayesStore \cite{wang2008bayesstore}. These
uncertain database management systems (UDBMS) provide solutions to cope with uncertain
relational data, allowing to efficiently answer traditional queries that select subsets of data based on predicates or join different datasets based on conditions.
Extensions to the UDBMS also allow answering of important classes of spatial queries such as top-k and distance-ranking queries \cite{hua2008ranking,cormode2009semantics,li2009unified,bernecker2010scalable,li2010ranking}.
While these existing UDBMS provide probabilistic guarantees for their query results, they offer no support for data mining tasks. A likely reason for this gap
is the theoretic result of \cite{dalvi2007efficient} which shows that the problem of answering complex queries is
\#P-hard in the number of database objects. 
To illustrate this theoretic result, imagine running a simple range query with an arbitrary query point on a database having $N$ objects each having an arbitrary non-zero probability of being in that range. Further, assume stochastic independence between these objects. In that case, any of the $2^N$ combinations of result objects becomes a possible result and must be returned.

Nevertheless, a number of polynomial time solutions have been proposed in the literature for various spatial query types such as nearest neighbor queries \cite{ReyKalPra04,KriKunRen07,IjiIsh09,ChengCMC08}, k-nearest neighbor queries \cite{BesSolIly08,LjoSin07,LiSahDes09,CheCheCheXie09} and (similarity-) ranking queries \cite{BerKriRen08,CorLiYi09,LiSahDes09,SolIly09}. On first glance, these findings may look contradicting (unless $P=NP$), providing polynomial-time solution to a \#P-hard problem. On closer look, it shows that different related work use different semantics to interpret a result. Aforementioned related works that provide polynomial time solutions for spatial queries on uncertain data make a simplifying assumption: Rather than computing the probability for each possible result, they compute the probability of each \emph{object} to be part of the result. This reduces the number of probabilities that have to be reported, in the worst-case, from a number exponential in the number of database objects, to a linear number. Re-using the example of a range query on an uncertain database, it is possible to compute the probability that a single object is within the query range independent from all other objects.

Unfortunately, this simplification also yields a loss of information, as it is not possible to infer the probability of query results given only probabilities of individual objects. Let us revisit the running example from introduction, which is duplicate in Figure~\ref{fig:toy2} for convenience. This example will illustrate how such an object-based approach, which computes object-individual probabilities, rather than the probabilities of result sets, may yield misleading results.

\begin{figure}[t]
    \centering
    \includegraphics[width=0.75\textwidth]{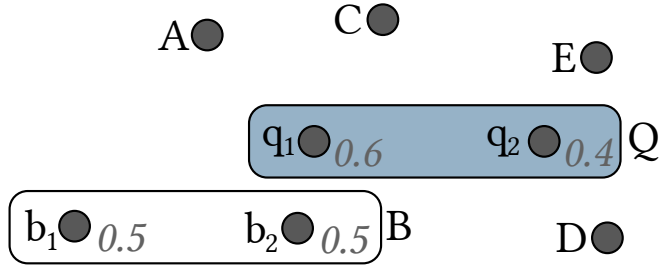}\vspace{-0.4cm}
    \caption{The Exemplary Uncertain Database from Figure~\ref{fig:toy}}\vspace{-0.5cm}
    \label{fig:toy2}
\end{figure}
\vspace{-0.1cm}
\begin{example}
Assume that the task is to simply find the probabilistic two nearest neighbors (2NN) of uncertain object $Q$. Objects $Q$ and $B$ have two alternative positions each, yielding a total of four possible worlds. For example, in one possible world, where $Q$ has location $q_1$ and $B$ has location $b_1$, the two nearest neighbors of $Q$ are $A$ and $C$. This possible world has a probability of $0.6\cdot 0.5=0.3$, obtained by assuming stochastic independence between objects. Following object-based result semantics, we can obtain probabilities of $0.3$, $0.3$, $0.6$, $0.4$, $0.4$ for objects $A$, $B$, $C$, $D$, and $E$ to be the $2$NNs of $Q$, respectively. However, this result hides any dependence between these result objects, such as objects $A$ and $B$ are mutually exclusive, while $D$ and $E$ are mutually inclusive.
\end{example}
\vspace{-0.1cm}

Towards approximate solutions, the Monte-Carlo DB (MCDB) system \cite{jampani2008mcdb} has been proposed, which samples possible worlds from the database, executes the query predicate on each sampled world.
MCDB estimates the probability of each object to be part of the result set. However, this approach of assigning a result probability to each object, as illustrate in the example above, cannot be extended to assess the probability of result sets. 
The problem is that the number of possible result sets may be exponentially large. To aggregate possible worlds into groups of mutually similar worlds (having similar results), an approach has been proposed for clustering of uncertain data~\cite{zufle2014representative,schubert2015framework} and more recently for general query processing on spatial data~\cite{Schmid2019Representative}. Revisiting the example of Figure~\ref{fig:toy}, this approach reports the results of a probabilistic query 2NN query as $\{A,C\}$, $\{B,C\}$, $\{D,E\}$, having respective probabilities of $0.3$, $0.3$, and $0.4$.
%
However, this approach (\cite{Schmid2019Representative}) can only be applied to spatial queries that return result sets, thus cannot be applied to more complex spatial queries or data mining tasks. 
To further elaborate the difference between solutions that compute the probability of each object to be part of the result, and solutions that compute the probability of each result, the following section will further survey the two different ``Probabilistic Result Semantics'': Object-based and Result-based.

\clearpage
\section{Probabilistic Result
Semantics}\label{subsec:ProbAnswerSem}


Recall that a spatial similarity query always requires a query object $q$ and, informally speaking, returns objects to the user that are similar to $q$. In the case of uncertain data, there exists two fundamental semantics to describe the result of such a probabilistic spatial similarity query. These different result semantics will
be denoted as \emph{object based result semantics} and the
\emph{result based result semantics}. Informally, the former semantics return possible \emph{result objects} and their probability of being part of the result, while the later semantics return possible results, which consist of a single object, of a set of objects or of a sorted list of objects depending on the query predicate, and their probability of being the result as a whole.

\subsection{Object Based Probabilistic Result Semantics} \label{sec:object_based}Using
\emph{object based probabilistic result semantics}, a
probabilistic spatial query returns a set of objects, each
associated with a probability describing the individual likelihood
of this object to satisfy the spatial query predicate.
\begin{definition}[Object Based Result Semantics]
Let $\DB$ be an uncertain spatial database, let $q$ be a query
object and let $\phi$ denote a spatial query predicate. Under
object based (OB) probabilistic result semantics, the result of a
probabilistic spatial $\phi$ query is a set
$\phi_{OB}(q,\DB)=\{(o\in\DB,P(o\in \phi_{OB}(q,\DB)))\}$ of
pairs. Each pair consists of a result object $o$ and its
probability $P(o\in\phi_{OB}(q,\DB))$ to satisfy $\phi$. Applying
possible world semantics (c.f. Definition \ref{ref:pws}) to
compute the probability $P(o\in\phi_{OB}(q,\DB))$ yields
\begin{equation}\label{eq:objectbased}
P(o\in\phi_{OB}(q,\DB))=\sum_{w\in\mathcal{W},o\in \phi(q,w)}P(w),
\end{equation}
where $\phi(q,w)$ is the deterministic result of a spatial $\phi$
query having query object $q$ applied to the deterministic
database defined by world $w$.
\end{definition}
Formally, the result of a probabilistic spatial query under object
based result semantics is a function
$$
\phi_{OB}(q,\DB) : \DB \rightarrow [0,1]
$$
$$
o\mapsto P(o\in\phi_{OB}(q,\DB)).
$$
mapping each object $o$ in $\DB$ (the results) to a probability value.
\clearpage

\begin{figure}[t]
    \centering
    \includegraphics[width=0.7\columnwidth]{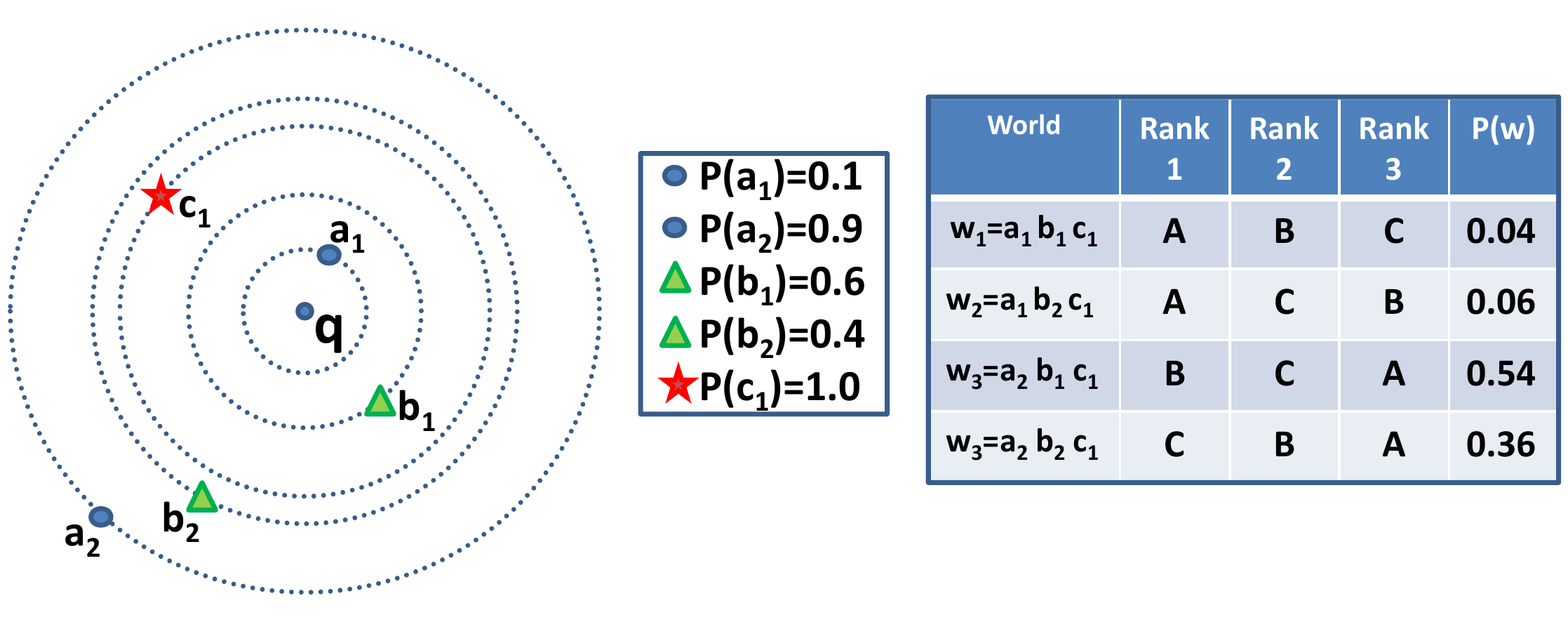}
    \caption{Example Database showing possible positions of uncertain objects and their corresponding probabilities.\vspace{-0.7cm}}
    \label{fig:ex_ranking}
\end{figure}

\begin{example}\label{ex:objectsemantics}
Figure \ref{fig:ex_ranking} depicts a database containing objects
$\DB=\{A,B,C\}$. Objects $A$ and $B$ have two alternative
locations each, while the position of $C$ is known for certain.
The locations and the probabilities of all alternatives are also
depicted in Figure \ref{fig:ex_ranking}. This leads to a total
number of four possible worlds. For example, in world $w_1$ where
$A=a_1$, $B=b_1$ and $C_1=c_1$, object $A$ is closest to $q$,
followed by objects $B$ and $C$. Assuming inter-object
independence, the probability of this world is given by the
product of individual instance probabilities $P(w_1)=P(a_1)\cdot
P(b_1)\cdot P(c_1)=0.04$. The ranking of each possible world and
the corresponding probability is also depicted in Figure
\ref{fig:ex_ranking}. For a probabilistic $2NN$ query for the
depicted query object $q$, the object based result semantic
computes the probability of each object to be in the two-nearest
neighbor set of $q$. For object $A$, the probability $P(A)$ of
this event equals $0.1$, since there exists exactly two possible
worlds $w_1$ and $w_2$ with a total probability of $0.04+0.06=0.1$
in which $A$ is on rank one or on rank two, yielding a result
tuple $(A,0.1)$. The complete result of a $P2NN$ query under
object based result semantics is $\{(A,0.1),(B,0.94),(C,0.96)\}$.
Note that in general, objects having a zero probability are
included in the result. For instance, assume an additional object
$D$ such that all instances of $D$ have a distance to $q$ greater
than the distance between $q$ and $b_2$. In this case, the pair
$(D,0)$ would be part of the result.
\end{example}
The result of a query under object based probabilistic result semantics contains one result tuple for every single
database object, even if the probability of the corresponding
object to be a result is very low or zero. In many applications,
such results may be meaningless. Therefore, the size of the result
set can be reduced by using a probabilistic query predicate as
explained later in Section \ref{sec:probpred}. A computational
problem is the computation of the probability $P(o\in\DB)$ of an
object $o$ to satisfy the spatial query predicate. In the example,
this probability was derived by iterating over the set of all
possible worlds $w_1,...,w_4$. Since this set grows exponentially
in the number of objects, such an approach is not viable in
practice. Therefore, efficient techniques to compute the
probability values $P(o)$ are required. A general paradigm to develop algorithms that avoid
an explicit enumeration of all possible worlds is presented in Section~\ref{part:paradigm}.

\subsection{Result Based Probabilistic Result Semantics} \label{sec:result_based}
 In the case of  result based result semantics, possible result
sets of a probabilistic spatial query are returned, each
associated with the probability of this result.
\begin{definition}[Result Based Result Semantics]\label{def:resultbased}
Let $\DB$ be an uncertain spatial database, let $q$ be a query
object and let $\phi$ denote a spatial query predicate. Under
result based (RB) result semantics, the result of a probabilistic
spatial $\phi$ query is a set
$$
\phi_{RB}(q,\DB)=\{(r,P(r))|r\subseteq\DB,
P(r)=\sum_{w\in\mathcal{W},\phi(q,w)=r}P(w)\}
$$
of pairs. This set contains one pair for each result
$r\subseteq\DB$ associated with the probability $P(r)$ of $r$ to
be the result. Following possible world semantics, the probability
$P(r)$ is defined as the sum of probabilities of all worlds
$w\in\mathcal{W}$ such that a spatial $\phi$ query returns $r$.
\end{definition}
Formally, the result of a probabilistic spatial query under result
based result semantics is a function
$$
\phi_{RB}(q,\DB) : \overline{\mathcal{P}}(\DB) \rightarrow [0,1]
$$
$$
r\mapsto P(r).
$$
mapping a elements of the power set $ \overline{\mathcal{P}}(\DB)$
(the results) to probability values.
\begin{example}\label{ex:resultsemantics}
For a probabilistic $2NN$ query for the depicted query object $q$,
result based result semantics require to compute the probability
of each subset of $\{A,B,C\}$ to be in the two-nearest neighbor
set of $q$. For the set $\{B,C\}$, the probability of this event
is $0.90$, since there is two possible worlds $w_3$ and $w_4$ with
a total probability of $0.54+0.36=0.9$ in which $B$ and $C$ are
both contained in the $2NN$ set of $q$. Note that in worlds $w_3$
and $w_4$ objects $B$ and $C$ appear in different ranking
positions. This fact is ignored by a $kNN$ query, as the results
are returned unsorted. In this example, the complete result of a
$P2NN$ query under object based result semantics is
$\{(\{A,B,C\},0)$, $(\{A,B\},0.04)$, $(\{A,C\},0.06)$,
$(\{B,C\},0.90)$, $(\{A\},0)$, $(\{B\},0)$, $(\{C\},0)$,
$(\{\emptyset\},0)\}$.
\end{example}
Clearly, the result of a query using result based result semantics
can be used to derive the result of an identical query using
object based result semantics. For instance, the result of Example
\ref{ex:resultsemantics} implies that the probability of object
$A$ to be a $2NN$ of $q$ is $0.10$, since there exists two
possible results using result based result semantics, namely
$(\{A,B\},0.04)$ and $(\{A,C\},0.06)$ having a total probability of
$0.04+0.06=0.1$, which matches the result of Example
\ref{ex:objectsemantics}.

\begin{lemma}\label{lem:OBRB}
Let $q$ be the query point of a probabilistic spatial $\phi$
query.
It holds that the result of this query using object based result
semantics $\phi_{OB}(q,\DB)$ is functionally dependent of the
result of this query using result based result semantics. The set
$PS\phi Q_{OB}(q,\DB)$ can be computed
 given only the set $PS\phi Q_{RB}(q,\DB)$ as follows:
$$
PS\phi Q_{OB}(q,\DB)=\{(o,P(o))|o\in\DB \wedge
P(o)=\sum_{(r,P(r))\in PS\phi Q_{RB}(q,\DB),o\in r}P(r)\}
$$
\end{lemma}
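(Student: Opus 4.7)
The plan is to derive the object-based probability from the result-based probabilities by a straightforward regrouping of the sum in Equation~\ref{eq:objectbased} according to a partition of the set of possible worlds.

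First I would start from the definition of object-based semantics: by Equation~\ref{eq:objectbased}, for any $o\in\DB$ we have $P(o\in\phi_{OB}(q,\DB))=\sum_{w\in\mathcal{W},\,o\in\phi(q,w)}P(w)$. The goal is then to show this equals $\sum_{(r,P(r))\in\phi_{RB}(q,\DB),\,o\in r}P(r)$.

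Next I would partition the set of possible worlds by their deterministic query result: for every candidate result set $r\subseteq\DB$, define $\mathcal{W}_r=\{w\in\mathcal{W}:\phi(q,w)=r\}$. Since $\phi(q,\cdot)$ is a (deterministic) function on each possible world $w$, the collection $\{\mathcal{W}_r\}_{r\subseteq\DB}$ forms a partition of $\mathcal{W}$. Moreover, by Definition~\ref{def:resultbased}, the result-based probability of $r$ is exactly $P(r)=\sum_{w\in\mathcal{W}_r}P(w)$.

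The key step is then the observation that the condition $o\in\phi(q,w)$ depends on $w$ only through $\phi(q,w)$, so $o\in\phi(q,w)$ iff $w\in\mathcal{W}_r$ for some $r$ with $o\in r$. Grouping the sum of world probabilities according to this partition yields
\begin{equation*}
\sum_{w\in\mathcal{W},\,o\in\phi(q,w)}P(w)\;=\;\sum_{r\subseteq\DB,\,o\in r}\;\sum_{w\in\mathcal{W}_r}P(w)\;=\;\sum_{r\subseteq\DB,\,o\in r}P(r),
\end{equation*}
which is exactly the claimed formula. I would also note that result sets $r$ with $P(r)=0$ are, by the convention in Definition~\ref{ref:pws} of excluding zero-probability worlds, either absent from $\phi_{RB}(q,\DB)$ or contribute $0$ to the sum, so restricting the outer sum to pairs $(r,P(r))\in\phi_{RB}(q,\DB)$ is harmless.

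There is no real obstacle: the argument is essentially a Fubini-style reordering of summation combined with the two definitions. The only subtlety worth stating explicitly is the determinism of $\phi$ on a fixed world, which is what guarantees that $\{\mathcal{W}_r\}$ is a partition rather than just a cover, and therefore justifies interchanging the order of summation without double-counting.
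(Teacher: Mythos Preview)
Your proposal is correct and follows essentially the same approach as the paper: partition $\mathcal{W}$ by the deterministic query result $\phi(q,w)$, observe that this is a disjoint and complete partition, and regroup the sum in Equation~\ref{eq:objectbased} accordingly to recover the result-based probabilities $P(r)$. The paper's proof uses the notation $w_S$ for your $\mathcal{W}_r$ and makes the partition property explicit via two displayed equations, but the argument is otherwise identical.
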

\begin{proof}
Let $\mathcal{W}$ denote the set of possible worlds of $\DB$, and
let $p(w\in\mathcal{W})$ denote the probability of a possible
world. Furthermore, let
$$
w_{S\subseteq\DB}:=\{w\in \mathcal{W}|\phi(q,w)=S\}
$$
denote the set of possible worlds such that $\phi(q,w)=S$, i.e.,
such that the predicate that a $\phi$ query using query object $q$
returns set $S$ holds. In each world $w$, query $q$ returns
exactly one deterministic result $PS\phi Q_{RB}(q,w)$. Thus, the
sets $w_{S\subseteq\DB}$ represent a complete and disjunctive
partition of $\mathcal{W}$, i.e., it holds that
\begin{equation}\label{eq:completepartitionworlds}\mathcal{W}=\bigcup_{S\subseteq\DB} w_S\end{equation} and \begin{equation}\label{eq:disjunctivepartitionworlds}\forall
R,S\in\overline{\mathcal{P}}(\DB): R\neq S\Rightarrow w_{R}\bigcap
w_{S}=\emptyset.\end{equation}

Using Equations \ref{eq:completepartitionworlds} and
\ref{eq:disjunctivepartitionworlds}, we can rewrite Equation
\ref{eq:objectbased}
$$P(o\in\phi_{OB}(q,\DB))=\sum_{w\in\mathcal{W},o\in
\phi(q,w)}P(w)$$ as
$$
P(o\in\phi_{OB}(q,\DB))=\sum_{S\in\overline{\mathcal{P}}(\DB)}\sum_{w\in
w_{S},o\in \phi(q,w)} P(w).
$$
By definition, query $q$ returns the same result for each world in
$w\in w_{S}$. This result contains object $o$ if $o\in S$. Thus we
can rewrite the above equation as
$$
P(o)=\sum_{S\in\overline{\mathcal{P}}(\DB),o\in S}P(S).
$$
The probabilities $P(S)$ are given by function $PS\phi
Q_{RB}(q,\DB)$.
\end{proof}
In the above proof, we have performed a linear-time reduction of
the problem of answering probabilistic spatial queries using
object based result semantics to the problem of answering
probabilistic spatial queries using result based result semantics.
Thus, we have shown that, except for a linear factor (which can be
neglected for most probabilistic spatial query types, since most
algorithm run in no better than log-linear time), the problem of
answering a probabilistic spatial query using result based result
semantics is at least as hard as answering a probabilistic spatial
query using object based semantics.

To summarize this section, we have learned about two different semantics to interpret the result of a spatial query on uncertain data: Object Based and Result Based. 
Understanding the difference of both result semantics is paramount to understand the landscape of existing research:
in some related publication the problem of answering some
probabilistic query may be proven to be in $\#P$, while another
publication gives a solution that lies in
$P$-TIME for the same spatial query predicate and the
same probabilistic query predicate. In such cases,
different result semantics may explain these results without
assuming $P=NP$.

\clearpage
\section{Probabilistic Query Predicates} \label{sec:probpred}
\begin{figure}[t]
    \centering
    \includegraphics[width=0.7\columnwidth]{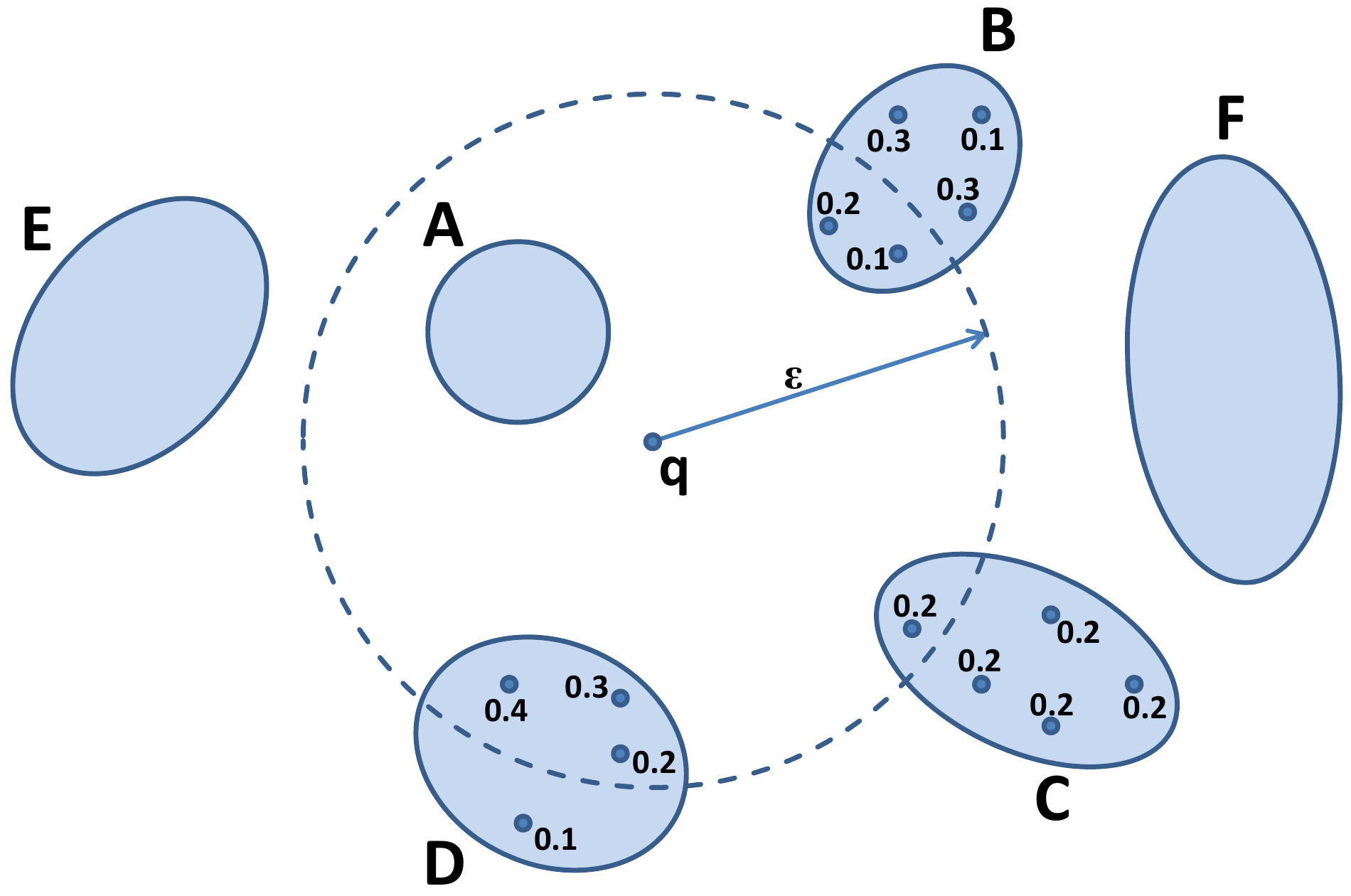}
    \caption{Example of an uncertain $\epsilon$-range query. Object $A$ is a true hit, objects $B$, $C$ and $D$ are possible hits.}
    \label{fig:blume}
\end{figure}
Generally, in an uncertain database, the question whether
an object satisfies a given query predicate $\phi$, such as being in a specified range or being a kNN of a query object, cannot
be answered deterministically due to uncertainty of
object locations. Due to this uncertainty, the predicate that an
object satisfies $\phi$ is a random variable, having some
(possibly zero, possibly one) probability. A probabilistic query
predicate quantifies the minimal probability required for a result
to qualify as a result that is sufficiently significant to be
returned to the user. This section formally define probabilistic query predicate for general query predicates.
The following definition are made for uncertain data in general, but can be applied analogously for uncertain spatial data.

A \emph{probabilistic query} can be defined
without any probabilistic query predicate. In this case, all
objects, and their respective probabilities are returned.
\begin{definition}[Probabilistic Query]
Let $\DB$ be an uncertain database, let $q$ be a query point and
let $\phi$ be a query predicate. A \emph{probabilistic query}
$\phi(q, \DB)$ returns all database objects $o\in\DB$ together
with their respective probability $P(o\in \phi(q,\DB))$ that $o$
satisfies $\phi$.
\begin{equation}\label{eq:Probabilistic Query}
\phi(q,\DB)=\{(o\in \DB,P(o\in \phi(q,\DB)))\}
\end{equation}
\end{definition}
The term $\emph{probabilistic}$ query is simply derived from the
fact that unlike a traditional query, a probabilistic query result
has probability values associated with each result. The main
challenge of answering a probabilistic query, is to compute the
probability $P(o\in \phi(q,\DB))$ for each object. Using possible
world semantics, a probabilistic query can be answered by
evaluating the query predicate for each object and each possible
world, i.e.,
$$
P(o\in \phi(q,\DB)):=\sum_{w_\in\mathcal{W}f_{ind}(\phi,w)\cdot
P(w)}.
$$
But clearly, it is necessary to avoid the combinatorial growth
that would be induced by this "naive" evaluation method.
\begin{example}
For example, consider the query \textit{``Return all friends of
user $q$ having a spatial distance of less than $100m$ to $q$''}
depicted in Figure \ref{fig:blume}. Thus, the predicate $\phi$ is
a $100$m-range predicate using query point $q$. We can
deterministically tell that friend $A$ must be within
$\epsilon=100m$ Euclidean distance of $q$, while friends $E$ and
$F$ cannot possibly be in range.
 The pairs $(A,1)$, $(E,0)$ and $(F,0)$ are added to the
result. For friends $B$, $C$ and $D$, this predicate cannot be
answered deterministically. Here, friend $B$ has some possible
positions located inside the $100m$ range of $q$, while other
possible positions are outside this range. The two locations
inside $q$'s range have a probability of $0.1$ and $0.2$,
respectively, thus the total probability of object $B$ to satisfy
the query predicate is $0.1+0.2=0.3$. The pair $(B,0.3)$ is thus
added to the result. The pairs $(C, 0.2)$ and $(D,0.9)$ complete
the result
$100\mbox{m-range}(q,\DB)=\{(A,1),(B,0.3),(C,0.2),(D,0.9),(E,0),(F,0)\}$.
\end{example}
The immediate question in the above example is:``Is a probability
of $0.3$ sufficient to warrant returning $B$ as a result?''. To
answer this question, a probabilistic query can explicitly specify
a probabilistic query predicate, to specify the requirements, in terms of probability, required for an object to qualify to be included in the result. The following subsections briefly survey the most commonly used probabilistic query predicates: probabilistic threshold queries and probabilistic Top$k$ queries. 

\subsection{Probabilistic Threshold Queries}\label{subsec:thresholdquery}
This paragraph defines a probabilistic query predicate that allows
to return only results that are statistically significant.
\begin{definition}[Probabilistic Threshold Query (P$\tau$Q)]
Let $\DB$ be an uncertain (spatial) database, let $q$ be a spatial
query object, let $0\leq \tau\leq 1$ be a real value and let
$\phi$ be a spatial query predicate. A \emph{probabilistic $\tau$
query} (P$\tau$Q) returns all objects $o\in\DB$ such that $o$ has
a probability of at least $\tau$ to satisfy $\phi(q,\DB)$:
$$
P\tau \phi(q,\DB):=\{o\in\DB|P(o\in\phi(q,\DB))\geq\tau\}.
$$
\end{definition}
\begin{example}
In Figure \ref{fig:blume}, a probabilistic threshold
$100m$-range(q,\DB) query with $\tau=0.5$ query returns the set of
objects $P0.5\,100\mbox{m-range}(q,\DB)=\{A,D\}$, since objects
$A$ and $D$ are the only objects such that their total probability
of alternatives inside the query region is equal or greater to
$\tau=0.5$.
\end{example}
Semantically, a probabilistic threshold spatial query returns all
results having a statistically significant probability to satisfy
the query predicate. Therefore, the probabilistic threshold query
serves as a statistical test of the hypothesis ``o is a result''
at a significance level of $\tau$. This test uses the probability
$P(o\in\phi(q,\DB))$ as a test statistic. Efficient algorithms to
compute this probability $P(o\in\phi(q,\DB))$, for the example of $k$NN and similarity ranking queries will be surveyed in Section~\ref{chap:sumofindependent} similarity ranking queries and R$k$NN
queries.

 A probabilistic threshold query on uncertain spatial
data is useful in applications, where the parameters of the
spatial predicate $\tau$ (e.g. the range of an $\epsilon$-range
query, or the parameter $k$ of a $k$NN query), as well as the
probabilistic threshold $\tau$ are chosen wisely, requiring expert
knowledge about the database $\DB$. If these parameters are chosen
inappropriately, no results may be returned, or the set of
returned result may grow too large. For example, if $\tau$ is
chosen very large, and if the database has a high grade of
uncertainty, then no result may be returned at all. Analogously, if the parameter
$\epsilon$ is chosen too small then no result will be returned,
while a too large value of $\epsilon$ may return all objects. The special case of having $\epsilon=0$, i.e., the case of returning all possible results (having a non-zero probability), is often used as default if no other probabilistic query predicate is specified (e.g. \cite{Soliman07,YiLKS08}). This case may be referred to as a \emph{possibilistic query predicate}, as all possible results (regardless of their probability) are returned.

\subsection{Probabilistic Top$k$ Queries}\label{subsec:topkquery}
In cases where insufficient information is given to select
appropriate parameter values, the following probabilistic query
predicate is defined to guarantee that only the $k$ most
significant results are returned.
\begin{definition}[Probabilistic Top$k$ Query (PTop$k$Q)]
Let $\DB$ be an uncertain spatial database, let $q$ be a spatial
query object, let $1\leq k \leq |\DB|$ be a positive integer, and
let $\phi$ be a spatial query predicate. A \emph{probabilistic
spatial Top$k$ query} (PTop$k$Q) returns the smallest set
PTop$k\phi$($q$,$\DB$) of at least $k$ objects such that
$$
\forall U_i\in \mbox{PTop}k\phi(q,\DB),
U_j\in\DB\setminus\mbox{PTop}k\phi(q,\DB):P(U_i\in
\phi(q,\DB))\geq P(U_j\in \phi(q,\DB))
$$
\end{definition}
Thus, a probabilistic spatial Top$k$ query returns the $k$ objects
having the highest probability to satisfy the query predicate.
Again, in case of ties, the resulting set may be greater than $k$.
\begin{example}
In Figure \ref{fig:blume}, a PTop$3\,\phi$ query using a
$\phi=100$m-range spatial predicate returns objects
$PTop3\,100\mbox{m-range}(q,\DB)=\{A,B,D\}$, since these objects
have the highest probability to satisfy the spatial predicate,
i.e., have the highest probability to be located in the spatial
$100m$-range.
\end{example}
Note, that the probabilistic Top$k$ query predicate can be
combined with a $kNN$ spatial query, i.e., with the case where
$\phi=kNN$. Such a probabilistic Top$k$ $jNN$ query returns the
set of $k$ objects having the highest probability, to be
$j$-nearest neighbor of the query object. Clearly, $k$ and $j$ may
have different integer values, such that differentiation is
needed.
\clearpage
\subsection{Discussion}
In summary, a probabilistic spatial query is defined by two query
predicates:
\begin{itemize}
\item A spatial predicate $\phi$ to select uncertain objects
having sufficiently \emph{high proximity} to the query object, and

\item a probabilistic predicate $\psi$, to select uncertain
objects having sufficiently \emph{high probability} to satisfy
$\phi$.
\end{itemize}
It has to be mentioned, that alternatively to this definition, a
single predicate can be used, that combines both spatial and
probabilistic features. For example, a monotonic score function
can be utilized, which combines spatial proximity and probability
to return a single scalar score. An example of such a monotone
score function is the expected distance function
$$
E(dist(q,U\in\DB))=\sum_{u\in U} P(u)\cdot dist(q,u),
$$
where $q$ is the query object, and $\DB$ is an uncertain database.
The expected support function is utilized by a number of related
publications, such as \cite{LjoSin07,CorLiYi09}. Using such a
monotone score function, objects with a sufficiently high score
can be returned. The advantage of using such an approach, is that
objects that are located very close to the query require a lower
probability to be returned as a result, while objects that are
located further away from the query object require a higher
probability. Yet, the main problem of such a combined predicate,
is that the probability of an object is treated as a simple
attribute, thus losing its probabilistic semantic. Thus, the
resulting score is very hard to interpret. An object that has a
high score, may indeed have a very low probability to exist at
all, because it is located (if it exists) very close to the query
object. Consequently, the score itself no longer contains any
confidence information, and thus, it is not possible to answer
queries according to possible world semantics using a single
aggregate, such as expected distance, only.

\section{The Paradigm of
Equivalent
Worlds}
\label{part:paradigm} In Section \ref{sec:pws} the concept of
possible world semantics has been introduced. Possible world
semantics give an intuitive and mathematically sound
interpretation of an uncertain spatial database. Furthermore,
queries that adhere to possible world semantics return unbiased
results, by evaluating the query on each possible world. Since any
such approach requires to run queries on an exponential number of
worlds, any naive approach is infeasible. Yet, for specific settings, such as specific result-based semantics, specific spatial query predicates and specific probabilistic query predicates, the literature has shown that it is possible to efficiently answer queries on uncertain data. 
While it is hardly feasible to enumerate all combinations of result-based semantics, spatial query predicates and probabilistic query predicates, this section introduces a general paradigm to find such a solution yourself. 
In a nutshell, the idea is to find, among the exponentially large set of possible worlds, a partitioning into a polynomially large number of subsets, which are equivalent for a given query. 

\subsection{Equivalent Worlds} The goal of this section is introduce a general paradigm to efficiently compute exact probabilities, while still
adhering to possible world semantics. For this purpose, reconsider Definition \ref{def:querypws}, defining the
probability that some predicate $\phi$ is satisfied in an
uncertain database $\DB$ as the total probability of all possible
worlds satisfying $\phi$. Recall Equation \ref{eq:pwprob}
$$
P(\phi(\DB))=\sum_{w\in\mathcal{W}}\mathcal{I}(\phi,w)\cdot P(w),
$$
where $\mathcal{W}$ is the set of all possible worlds;
$\mathcal{I}(\phi,w)$ is an indicator function that returns one if
predicate $\phi$ holds (i.e., resolves to true) in the crisp
database defined by world $w$ and zero otherwise, and $P(w)$ is
the probability of world $w$. To reduce the number of possible
worlds that need to be considered to compute $P(\phi(\DB))$, we
first need the following definition.

\begin{definition}[Class of Equivalent Worlds]\label{def:eqworlds}
Let $\phi$ be a query predicate and let $S\subseteq\mathcal{W}$ be
a set of possible worlds such that for any two worlds $w_1,w_2\in
S$ we can guarantee that $\phi$ holds in world $w_1$ if an only if
$\phi$ holds in world $w_2$, i.e.,
$$
\forall w_1,w_2\in S:\mathcal{I}(\phi,w_1) \Leftrightarrow
\mathcal{I}(\phi,w_2)
$$
Then set $S$ is called a \emph{class of worlds equivalent with
respect to} $\phi$. In the remainder of this chapter, if the spatial
query predicate $\phi$ is clearly given by the context, then $S$
will simply be denoted as a \emph{class of equivalent worlds}. Any
worlds $w_i,w_j\in S$ are denoted as \emph{equivalent worlds}.
\end{definition}
We now make the following observation:
\begin{corollary}\label{coro:eqworlds1}
Let $S\subseteq\mathcal{W}$ be a class of worlds equivalent with
respect to $\phi$ (c.f. Definition \ref{def:eqworlds}, we can
rewrite Equation \ref{eq:pwprob} as follows:
$$P(\phi(\DB))=\sum_{w\in\mathcal{W}}\mathcal{I}(\phi,w)\cdot
P(w) \Leftrightarrow$$
\begin{equation}\label{eq:pwprob2}
P(\phi(\DB))=\sum_{w\in\mathcal{W}\setminus
S}\mathcal{I}(\phi,w)\cdot P(w)+\mathcal{I}(\phi,w\in S)\cdot
\sum_{w\in S}P(w). \end{equation}
\end{corollary}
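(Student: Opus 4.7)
The plan is straightforward: exploit the fact that $\mathcal{W}$ can be written as the disjoint union $\mathcal{W}=(\mathcal{W}\setminus S)\cup S$ to split the sum in Equation \ref{eq:pwprob}, then use the defining property of a class of equivalent worlds to pull the (now constant) indicator out of the subsum over $S$.

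First I would rewrite
$$P(\phi(\DB))=\sum_{w\in\mathcal{W}}\mathcal{I}(\phi,w)\cdot P(w)$$
as
$$P(\phi(\DB))=\sum_{w\in\mathcal{W}\setminus S}\mathcal{I}(\phi,w)\cdot P(w)\;+\;\sum_{w\in S}\mathcal{I}(\phi,w)\cdot P(w),$$
which is valid purely because $(\mathcal{W}\setminus S)$ and $S$ partition $\mathcal{W}$.

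Next I would invoke Definition \ref{def:eqworlds}: since every pair of worlds $w_1,w_2\in S$ satisfies $\mathcal{I}(\phi,w_1)\Leftrightarrow\mathcal{I}(\phi,w_2)$, the map $w\mapsto\mathcal{I}(\phi,w)$ is constant on $S$. Denoting this common value by $\mathcal{I}(\phi,w\in S)$ as in the corollary statement, it can be factored out of the second sum:
$$\sum_{w\in S}\mathcal{I}(\phi,w)\cdot P(w) \;=\; \mathcal{I}(\phi,w\in S)\cdot \sum_{w\in S}P(w).$$
Substituting this identity into the split form yields Equation \ref{eq:pwprob2} verbatim.

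There is no real obstacle: the corollary is essentially a rewriting exercise enabled immediately by the definition of equivalent worlds. If anything worth flagging, it is that the notation $\mathcal{I}(\phi,w\in S)$ appearing on the right-hand side is a mild abuse, standing for the unique value attained by $\mathcal{I}(\phi,\cdot)$ on any (equivalently, every) $w\in S$; its well-definedness is precisely the content of Definition \ref{def:eqworlds} and so need only be remarked upon rather than argued.
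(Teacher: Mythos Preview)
Your proposal is correct and follows the same idea as the paper: split the sum over $\mathcal{W}$ into $\mathcal{W}\setminus S$ and $S$, then use Definition~\ref{def:eqworlds} to treat the indicator as constant on $S$. The only cosmetic difference is that the paper carries out an explicit two-case analysis on whether $\mathcal{I}(\phi,w)=0$ or $\mathcal{I}(\phi,w)=1$ for all $w\in S$, whereas you factor the constant indicator out directly; the content is identical.
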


\clearpage
\begin{proof}
Due to the assumption that for any two worlds $w_1,w_2\in S$ it
holds that $\phi$ holds in world $w_1$ if an only if $\phi$ holds
in world $w_2$, we get $\mathcal{I}(\phi,w_1)=1 \Leftrightarrow
\mathcal{I}(\phi,w_2)=1$ and $\mathcal{I}(\phi,w_1)=0
\Leftrightarrow \mathcal{I}(\phi,w_2)=0$ by definition of function
$\mathcal{I}$. Due to this assumption, we have to consider two
cases.

{\bf Case 1:} $\forall w\in S:\mathcal{I}(\phi,w)=0$

In this case, both Equation \ref{eq:pwprob} and Equation
\ref{eq:pwprob2} can be rewritten as
$$
P(\phi(\DB))=\sum_{w\in\mathcal{W}\setminus
S}\mathcal{I}(\phi,w)\cdot P(w).
$$

{\bf Case 2:} $\forall w\in S:\mathcal{I}(\phi,w)=1$

In this case, both Equation \ref{eq:pwprob} and Equation
\ref{eq:pwprob2} can be rewritten as
$$
P(\phi(\DB))=\sum_{w\in\mathcal{W}\setminus
S}\mathcal{I}(\phi,w)\cdot P(w)+\sum_{w\in S}P(w)
$$\qed
\end{proof}
The only difference between both cases is the additive term
$\sum_{w\in S}P(w)$, which exists only in Case 2. The indicator
function $\mathcal{I}(\phi,w\in S)$ ensures that this term is only
added in the second case. As main purpose, Corollary
\ref{coro:eqworlds1} states that, given a set of equivalent worlds
$S$, we only have to evaluate the indictor function
$\mathcal{I}(\phi,w)$ on a single representative world $w\in S$,
rather than on each world in $S$. This allows to reduce the number
of (crisp) $\phi$ queries required to compute Equation
\ref{eq:pwprob} by $|S|-1$.

Corollary \ref{coro:eqworlds1} leads to the following Lemma.
\begin{lemma}\label{lem:pwpartition}
Let $\mathcal{S}$ be a partitioning of $\mathcal{W}$ into
disjoint sets such that $\bigcup_{S\in
\mathcal{S}}S=\mathcal{W}$ and for all
$S_1,S_2\in\mathcal{S}:S_1\cap S_2=\emptyset$. Equation
\ref{eq:pwprob} can be rewritten as
$$P(\phi(\DB))=\sum_{w\in\mathcal{W}}\mathcal{I}(\phi,w)\cdot
P(w) \Leftrightarrow$$
\begin{equation}\label{eq:pwprobreformed}
P(\phi(\DB))=\sum_{S\in\mathcal{S}}\mathcal{I}(\phi,w\in S)\cdot
\sum_{w\in S}P(w).
\end{equation}
\end{lemma}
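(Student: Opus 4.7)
The plan is to derive Equation \ref{eq:pwprobreformed} directly from Equation \ref{eq:pwprob} by splitting the sum over $\mathcal{W}$ along the given partition $\mathcal{S}$, and then invoking Corollary \ref{coro:eqworlds1} (or rather its underlying idea) on each block. Note that the notation $\mathcal{I}(\phi,w\in S)$ on the right-hand side is only meaningful if $\mathcal{I}(\phi,\cdot)$ takes a single value on all of $S$; I would therefore read the lemma as implicitly requiring each $S\in\mathcal{S}$ to be a class of equivalent worlds with respect to $\phi$ in the sense of Definition \ref{def:eqworlds}, and I would either state this explicitly at the start of the proof or (more cleanly) observe that whenever a partition block is \emph{not} an equivalence class, the claim fails, so this is the intended reading.

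First, I would exploit the two partition hypotheses, $\bigcup_{S\in\mathcal{S}} S=\mathcal{W}$ and $S_1\cap S_2=\emptyset$ for distinct $S_1,S_2\in\mathcal{S}$, to rewrite the single sum over $\mathcal{W}$ in Equation \ref{eq:pwprob} as an iterated sum
\[
\sum_{w\in\mathcal{W}}\mathcal{I}(\phi,w)\cdot P(w)=\sum_{S\in\mathcal{S}}\sum_{w\in S}\mathcal{I}(\phi,w)\cdot P(w),
\]
which is a standard manipulation justified exactly by the union being covering and the blocks being pairwise disjoint. Next, I would fix an arbitrary block $S\in\mathcal{S}$ and use the equivalence property inside $S$: by Definition \ref{def:eqworlds}, the value of $\mathcal{I}(\phi,w)$ is the same for every $w\in S$, so this common value is exactly what the notation $\mathcal{I}(\phi,w\in S)$ denotes and we may pull it out of the inner sum, giving
\[
\sum_{w\in S}\mathcal{I}(\phi,w)\cdot P(w)=\mathcal{I}(\phi,w\in S)\cdot\sum_{w\in S}P(w).
\]
Substituting this back into the iterated sum yields Equation \ref{eq:pwprobreformed}, completing the proof.

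There is really no hard step here; the proof is a two-line bookkeeping argument once the partition and equivalence hypotheses are in hand. The only subtlety I would be careful about is the one flagged above: the lemma is only as strong as the implicit assumption that the partition $\mathcal{S}$ respects the $\phi$-equivalence relation. I would mention this explicitly, because the value of the lemma in the remainder of the chapter is exactly that it reduces evaluating $\phi$ on all $|\mathcal{W}|$ worlds to evaluating $\phi$ on one representative per block, which is only sound when the blocks are $\phi$-equivalence classes (or refinements thereof). One could alternatively present the proof as an induction over $\mathcal{S}$ applying Corollary \ref{coro:eqworlds1} once per block, but the direct double-sum rewriting above is cleaner and avoids an unnecessary inductive scaffolding.
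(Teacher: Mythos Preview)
Your proof is correct and matches the paper's approach: the paper's proof is the one-liner ``apply Corollary~\ref{coro:eqworlds1} once for each $S\in\mathcal{S}$,'' which is exactly the alternative you mention at the end, and your direct double-sum rewriting is just that argument unpacked. Your observation that the lemma implicitly assumes each block $S$ is a class of $\phi$-equivalent worlds is also on point---the paper relies on this without restating it in the lemma.
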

\begin{proof}
Lemma \ref{lem:pwpartition} is derived by applying Corollary
\ref{coro:eqworlds1} once for each $S\in\mathcal{S}$.\qed
\end{proof}
The next subsection will show how to leverage Lemma~\ref{lem:pwpartition} to partition the set of all possible worlds into equivalence classes that are guaranteed to have the same result for a given query predicate, and how to exploit this partitioning to efficiently answer queries.
\clearpage
\subsection{Exploiting Equivalent Worlds for Efficient Algorithms}
 Given a partitioning $\mathcal{S}$ of all possible worlds, Equation \ref{eq:pwprobreformed} requires to perform
 the following two tasks. The first task requires to evaluate the
 indicator function $\mathcal{I}(\phi,w\in S)$ for one representative
 world of each partition. This can be achieved by performing a
 traditional (non-uncertain) $\phi$ query on these representatives.
The final challenge is to efficiently compute the total
probability $P(S):=\sum_{w\in S}P(w)$ for each equivalent class
$S\in\mathcal{S}$. This computation must avoid an enumeration of
all possible worlds, i.e., must be in $o(|S|)$.\footnote{Note that
if an exponential large set is partitioned into a polynomial
number of subsets, then at least one such subset must have
exponential size. This is evident considering that
$O(\frac{2^n}{poly(n)})=O(2^n)$.} Achieving an efficient
computation is a creative task, and usually requires to exploit
properties of the model (such as object independence) and
properties of the spatial query predicate.
\begin{figure}[t]
    \centering
    \includegraphics[width=0.8\columnwidth]{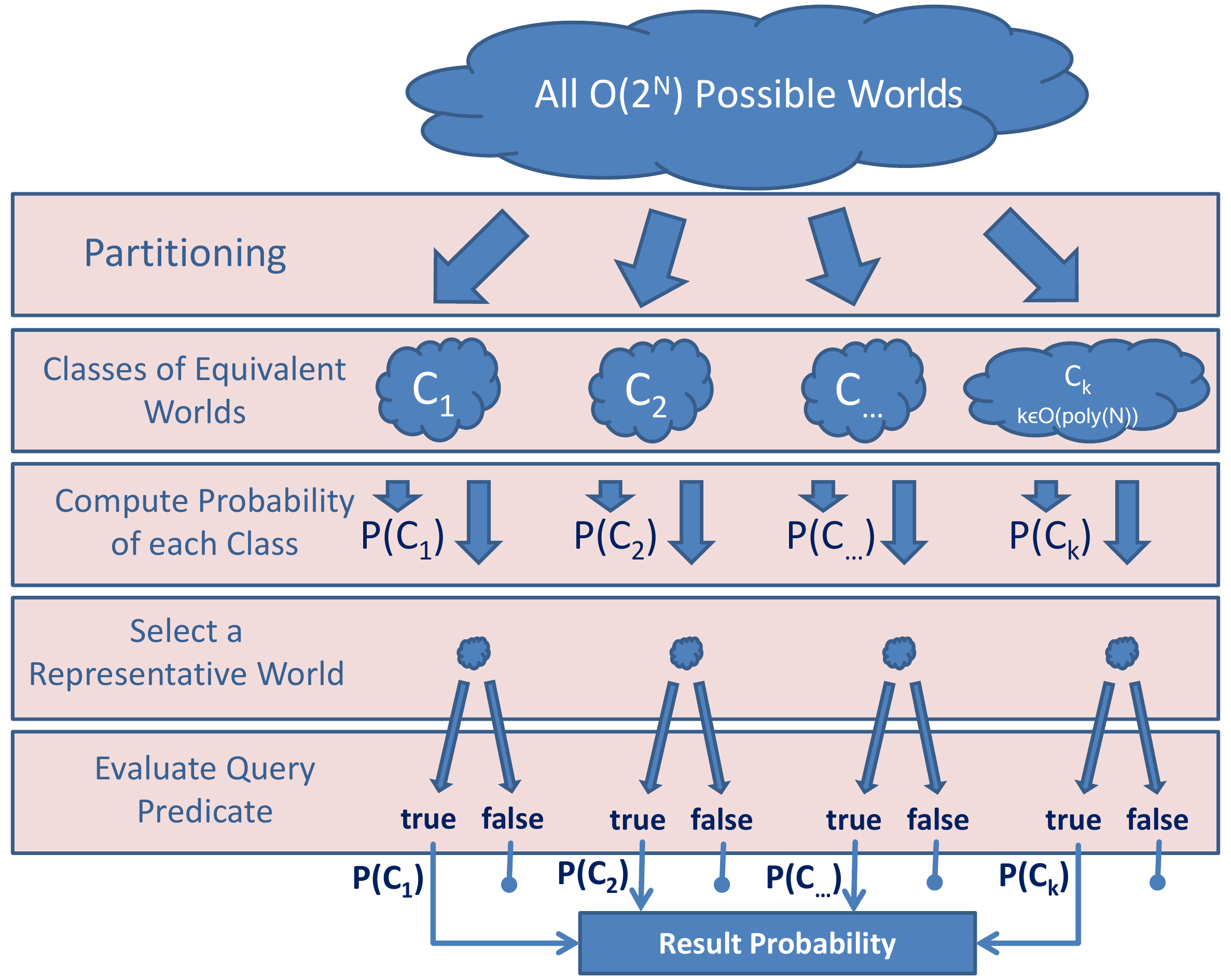}
    \caption{Summary of the Paradigm of Equivalent Worlds.}
    \label{fig:paradigm}
\end{figure}
The paradigm of equivalent worlds is illustrated and summarized in
Figure \ref{fig:paradigm}. In the first step, set of all possible
worlds $\mathcal{W}$, which is exponential in the number $N$ of
uncertain objects, has to be partitioned into a polynomial large
set of classes of equivalent worlds, such that all worlds in the
same class are guaranteed to be equivalent given the query
predicate $\phi$. This yields a the set
$\mathcal{C}=\{C_1,C_2,...,C_k\}$ of classes of equivalent worlds.
To allow efficient processing, this set must be polynomial in
size, since each class has to be considered individually in the
following. Next, we require to compute the probability of each
class $C_i$, without enumeration of all possible worlds contained
in $C_i$, the number of which may still be exponential. In fact,
at least one class $C_i$ must contain $O(2^N)$ possible worlds.
Next, we need to decide, for each class $C_i$, whether all worlds
$w\in C_i$ satisfy the query predicate $\phi$, or whether no world
$w\in C_i$ satisfies $\phi$. Due to equivalence of all possible
worlds in $C_i$, these are the only possible cases. For some query
predicates, this decision can be made using special properties of
the query predicate, as we will see later in this chapter. In the
general case, this decision can be made by choosing one
representative world $w\in C_i$ (e.g. at random) from each class
$C_i$, and evaluating the query predicate on this world. This
yields at total run-time of $O(|\mathcal{C}|)\cdot
O(\mathcal{I}(\phi,w))$, where $\mathcal{I}(\phi,w)$ is the time
complexity of evaluating the query predicate $\phi$ on the certain
database $w$. If this query predicate can be evaluated in
polynomial time, i.e., if $O(\mathcal{I}(\phi,w))\in O(poly(N))$,
then the total run-time is in $O(poly(N))$. This is
evident, since if $O(\mathcal{C})$ is in $O(poly(N))$, then
$O(\mathcal{C})\cdot O(\mathcal{I}(\phi,w))$ is in
$O(poly(N))\cdot O(poly(N))$ which is in $O(poly(N))$. For each
class $C_i$, where the representative world satisfies $\phi$, the
corresponding probability $P(C_i)$ is added to the result
probability.

The following lemma summarizes the assumptions that a query
predicate has to satisfy in order to efficiently apply paradigm of
finding equivalent worlds.
\begin{lemma}\label{lem:paradigm}
Given a query predicate $\phi$ and an uncertain database $\DB$ of size $N:=|DB|$, we
can answer $\phi$ on $\DB$ in polynomial time if the following four
conditions are satisfied:
\begin{itemize}
\item[I] A traditional $\psi$ query on non-uncertain data can be
answered in polynomial time.

 \item[II] we can identify a
partitioning $\mathcal{C}$ of $\mathcal{W}$ into classes
$C\in\mathcal{C}$ of equivalent worlds (see Definition
\ref{def:eqworlds}.

\item[III] The number $|\mathcal{C}|$ of classes is at most
polynomial in $N$.

\item[IV] The the total probability of a class $S\in\mathcal{C}$
can be computed in at most polynomial time.
\end{itemize}
\end{lemma}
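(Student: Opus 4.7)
The plan is to assemble the four hypotheses in the order they are listed and show that each contributes the polynomial factor needed to make the reformulation of $P(\phi(\DB))$ from Lemma~\ref{lem:pwpartition} computable in polynomial time. The backbone of the argument will be Equation~\ref{eq:pwprobreformed}, which already rewrites the exponential sum over $\mathcal{W}$ as a sum over a partition $\mathcal{S}$ of equivalent worlds. My job is simply to verify that, under conditions I--IV, every term in that sum can be produced and combined within $O(\text{poly}(N))$ time, and that the outer sum has only polynomially many terms.

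First I would invoke condition II to obtain an explicit partition $\mathcal{C}$ of $\mathcal{W}$ into equivalence classes (with respect to $\phi$). Strictly speaking, condition II only asserts that such a partition exists; for the runtime argument I will assume that it is constructively available (e.g.\ each class is represented by a succinct descriptor produced in polynomial time), which is the standard reading in the paradigm illustrated in Figure~\ref{fig:paradigm}. Condition III then guarantees $|\mathcal{C}|\in O(\text{poly}(N))$, so that the outer summation in Equation~\ref{eq:pwprobreformed} runs over polynomially many terms. For each class $C\in\mathcal{C}$ I pick a single representative world $w_C\in C$; by Definition~\ref{def:eqworlds} the value $\mathcal{I}(\phi,w_C)$ coincides with $\mathcal{I}(\phi,w)$ for every $w\in C$, so evaluating the predicate on this one representative suffices.

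Next I would use condition I to evaluate $\mathcal{I}(\phi,w_C)$ in polynomial time on the deterministic database $w_C$, and condition IV to compute $P(C)=\sum_{w\in C}P(w)$ in polynomial time without enumerating the (possibly exponentially many) worlds in $C$. Multiplying and summing yields
\begin{equation*}
P(\phi(\DB)) \;=\; \sum_{C\in\mathcal{C}} \mathcal{I}(\phi,w_C)\cdot P(C),
\end{equation*}
which is exactly Equation~\ref{eq:pwprobreformed}. The total cost is $O(|\mathcal{C}|)$ outer iterations, each doing one polynomial-time predicate evaluation plus one polynomial-time class-probability computation, giving $O(\text{poly}(N))\cdot(O(\text{poly}(N))+O(\text{poly}(N)))\subseteq O(\text{poly}(N))$, as required.

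The only real obstacle I anticipate is the bookkeeping around condition II: the lemma states the existence of a partition but the runtime argument implicitly requires that the partition be enumerable (or at least that representatives $w_C$ and the data needed for IV be produced) in polynomial time. I would address this by making that requirement explicit in the statement, or by absorbing it into conditions III and IV, noting that in every concrete application of the paradigm later in the chapter, the classes are indexed by small combinatorial structures (such as distance-based orderings) whose enumeration is subsumed by the polynomial bounds of III and IV. With that clarification, the proof reduces to a direct substitution into Equation~\ref{eq:pwprobreformed} and a routine runtime accounting.
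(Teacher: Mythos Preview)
Your proposal is correct and follows essentially the same route as the paper: invoke Equation~\ref{eq:pwprobreformed} via condition~II, bound the outer sum by condition~III, and handle each summand by conditions~I and~IV, then multiply the polynomial factors. Your additional remark that condition~II tacitly requires the partition (and representatives) to be \emph{producible} in polynomial time is a valid caveat that the paper's own proof leaves implicit.
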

\begin{proof}
Answering a $\phi$ query on $\DB$ requires to evaluate Equation
\ref{eq:pwprob} which we reformed into Equation
\ref{eq:pwprobreformed} using Property II. This requires to
iterate over all $|\mathcal{C}|$ classes of equivalent worlds in polynomial
time due to Property III. For each class $C\in\mathcal{C}$, this requires to
perform two tasks. The first task requires to compute the total
probability of all worlds in $C$, and the second task requires to
evaluate $\phi$ on a single possible world $w\in C$. The former
task can be performed in polynomial time due to Property IV. The
later task requires to perform a crisp $\phi$ query on the (crisp)
world $w$ in polynomial time due to Property I.
\end{proof}

\clearpage
\section{Case Study: Range Queries and the Sum of Independent Bernoulli Trials}\label{chap:sumofindependent}
In this chapter, the paradigm of equivalent worlds will be applied
to efficiently solve the problem of computing the number of uncertain objects located within a specified range.

\begin{example}
As an example, consider the setting depicted in Figure~\ref{fig:blume}. In this example, we have four objects, $A$, $B$, $C$, and $D$ having probabilities of $1.0$, $0.3$, $0.2$, and $0.9$ of being located inside the query region defined by query location $q$ and query range $\epsilon$.
Intuitively, the number of objects in this range can be anywhere between one and four, as only object $A$ is guaranteed to be inside the range, while on $B$, $C$, and $D$ have a chance to be inside this range among all other objects. How can we efficiently compute the distribution of this number of objects inside the query range? What is the probability of having exactly one, two, three or four object in the range? Intuitively, the number of objects corresponds depends on the result of three ``coin-flips'', each using a coin with a different bias of flipping heads. 
\end{example}
Each such ``coin-flip'' is a Bernoulli trial, which may have a successful (``heads'') of unsuccessful (``tails'') outcome. In the case where all Bernoulli trials have the same probability $p$, the number of successful trials out of $N$ trials is described by the well-known binomial distribution.
In the case where each trial may have a different probability to succeed, the number of successful trials follows a Poisson-binomial distributions \cite{hoeffding1956distribution}.

Formally, let $X_1,...,X_N$ be independent and not necessarily identically
distributed Bernoulli trials, i.e., random variables that may only
take values zero and one. Let $p_i:=P(X_i=1)$ denote the
probability that random variable $X_i$ has value one. In this
section, we will show how to efficiently compute the distribution
of the random variable
$$
\sum_{i=1}^N X_i
$$
without enumeration of all possible worlds. That is, for each
$0\leq k\leq N$, this section shows how to compute the probability
$P(\sum_{i=1}^N X_i=k)$ that exactly $k$ trials are successful.

This section shows two commonly used solutions to compute the distribution of $\sum_i X_i$ efficiently: The Poisson-binomial recurrence, and a technique based on generating functions. Both solutions have in common that they 
identify worlds that are equivalent to the query predicate.
\clearpage

%
%
\begin{figure}[t]
    \centering
    \includegraphics[width=0.7\columnwidth]{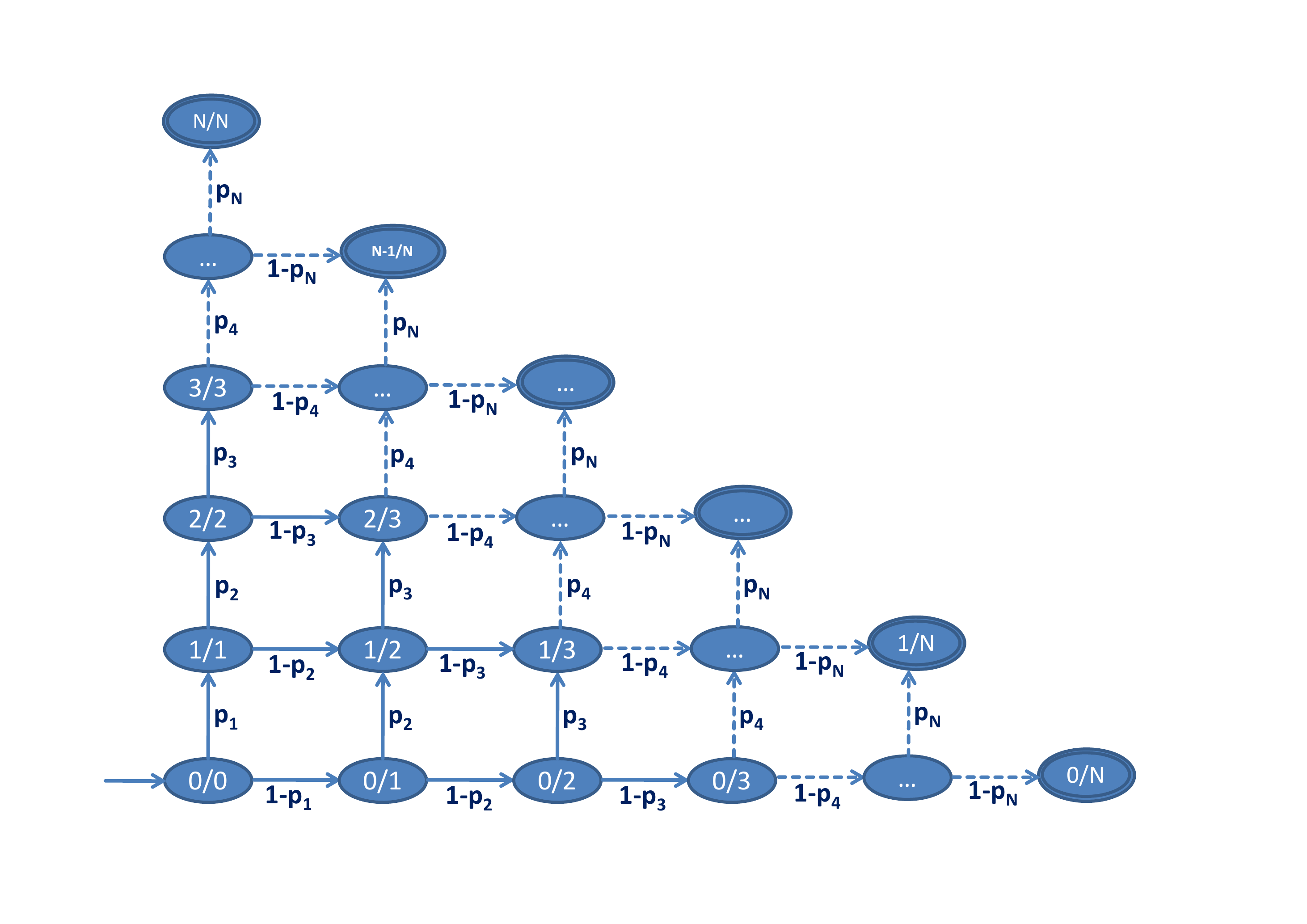}
    \caption{Deterministic finite automaton corresponding to the problem of the sum of independent Bernoulli trials.}
    \label{fig:PBR_automaton}
\end{figure}

\subsection{Poisson-Binomial Recurrence}\label{paradigm:pbr}
The first approach iteratively computes the distribution of the
 sum of the first $1\leq k \leq N$ Bernoulli variables given
the distribution of the sum of the first $k-1$ Bernoulli
variables.

 To gain an intuition of how to do this
efficiently, consider the deterministic finite automaton depicted
in Figure \ref{fig:PBR_automaton}.\footnote{Note that this
automaton is deterministic, despite the process of choosing a
successor node being a random event. Once the Bernoulli trial
corresponding to a node has been performed, the next node will be
chosen deterministically, i.e., the upper node will be chosen if
the trial was successful, and the right node will be chosen
otherwise. Either way, there is exactly one successor node.} The
states (i/j) of this automaton correspond to the random event that
out of the first $j$ Bernoulli trials $X_1,...,X_j$, exactly $i$
trials have been successful. Initially, zero Bernoulli trials have
been performed, out of which zero (trivially) were successful.
This situation is represented by the initial state $(0/0)$ in
Figure \ref{fig:PBR_automaton}. Evaluating the first Bernoulli
trial $X_1$, there is two possible outcomes: The trial may be
successful with a probability of $p_1$, leading to a state $(1/1)$
where one out of one trials have been successful. Alternatively,
the trial may be unsuccessful, with a probability of $1-p_1$,
leading to a state $(0/1)$ where zero out of one trial have been
successful. The second trial is then applied to both possible
outcomes. If the first trial has not been successful, i.e., we are
currently located in state $(0/1)$, then there is again two
outcomes for the second Bernoulli trial, leading to state $(1/2)$
and $(0/2)$ with a probability of $p_2$ and $1-p_2$ respectively.
If currently located in state $(0/1)$, the two outcomes are state
$(2/2)$ and state $(1/2)$ with the same probabilities. At this
point, we have unified two different possible worlds that are
equivalent with respect to $\sum_i X_i$: The world where trial one
has been successful and trial two has not been successful, and the
world where trial one has not been successful and trial two has
been successful have been unified into state $(1/2)$, representing
both worlds. This unification was possible, since both paths
leading to state $(1/2)$ are equivalent with respect to the number
of successful trials.

The three states $(0/2)$, $(1/2)$ and $(2/2)$ are then subjected
to the outcome of the third Bernoulli trial, leading to states
$(0/3)$, $(1/3)$, $(2/3)$ and $(3/3)$. That is a total of four
states for a total of $2^3=8$ possible worlds. In summary, the
number of states in Figure \ref{fig:PBR_automaton} equals
$\frac{N^2}{2}$. However, it is not yet clear how to compute the
probability of a state $(i/j)$ efficiently. Naively, we have to
compute the sum over all paths leading to state $(i/j)$. For
example, the probability of state (2/3) is given by $p_1\cdot p_2
\cdot (1-p_3)+p_1\cdot (1-p_2)\cdot p_3+(1-p_1)\cdot p_2\cdot
p_3$. This naive computation requires to enumerate all $\binom{j}{p_3}$ combinations of paths leading to state $(i/j)$.

For an efficient computation, we make the following observation:
Each state of the deterministic finite automaton depicted in Figure \ref{fig:PBR_automaton}
has at most two incoming edges. Thus, to compute the probability
of a state $(i/j)$, we only require the probabilities of states
leading to $(i/j)$. The states leading to state $(i/j)$ are state
$(i-1/j-1)$ and state $(i/j-1)$. Given the probabilities
$P(i-1/j-1)$ and $P(i/j-1)$, we can compute the probability
$P(i/j)$ of state (i/j) as follows:
\begin{equation}\label{eq:pbr}
P(i/j)=P(i-1/j-1)\cdot p_j + P(i,j-1)\cdot (1-p_j)
\end{equation}
where
$$
P(0/0)=1 \mbox{ and } P(i/j)=0\mbox{ if }i>j\mbox{ or if }i<0.
$$
Equation \ref{eq:pbr} is known as the Poisson-Binomial Recurrence
(To the best of our knowledge, the Poisson binomial recurrence was
first introduced by \cite{Lange99}) and can be used to compute the
probabilities of states $(k/N),0\leq k\leq N$ which by definition,
correspond to the probabilities $P(\sum_{i=1^N}X_i=k)$ that out of
all $N$ Bernoulli trials, exactly $k$ trials are successful.

This approach follows the paradigm of equivalent worlds in each
iteration $k$: The set of all $2^k$ possible worlds is partitioned
into $k+1$ equivalent sets, each corresponding to a state $i/k$,
where $i\leq k$. Each class contains only and all of the $\binom{k}{i}$ possible worlds where exactly $i$ Bernoulli trails
succeeded. The information about the particular sequence of the
successful trials, i.e., which trials were successful and which
were unsuccessful is lost. This information however, is no longer
necessary to compute the distribution of $\sum_{i=0}^{N}X_i$,
since for this random variable, we only need to know the number of
successful trials, not their sequence. This abstraction allows to
remove the combinatorial aspect of the problem.

An example showcasing the Poisson binomial recurrence is given in
the following.

\begin{figure}[t]
    \centering
    \includegraphics[width=0.7\columnwidth]{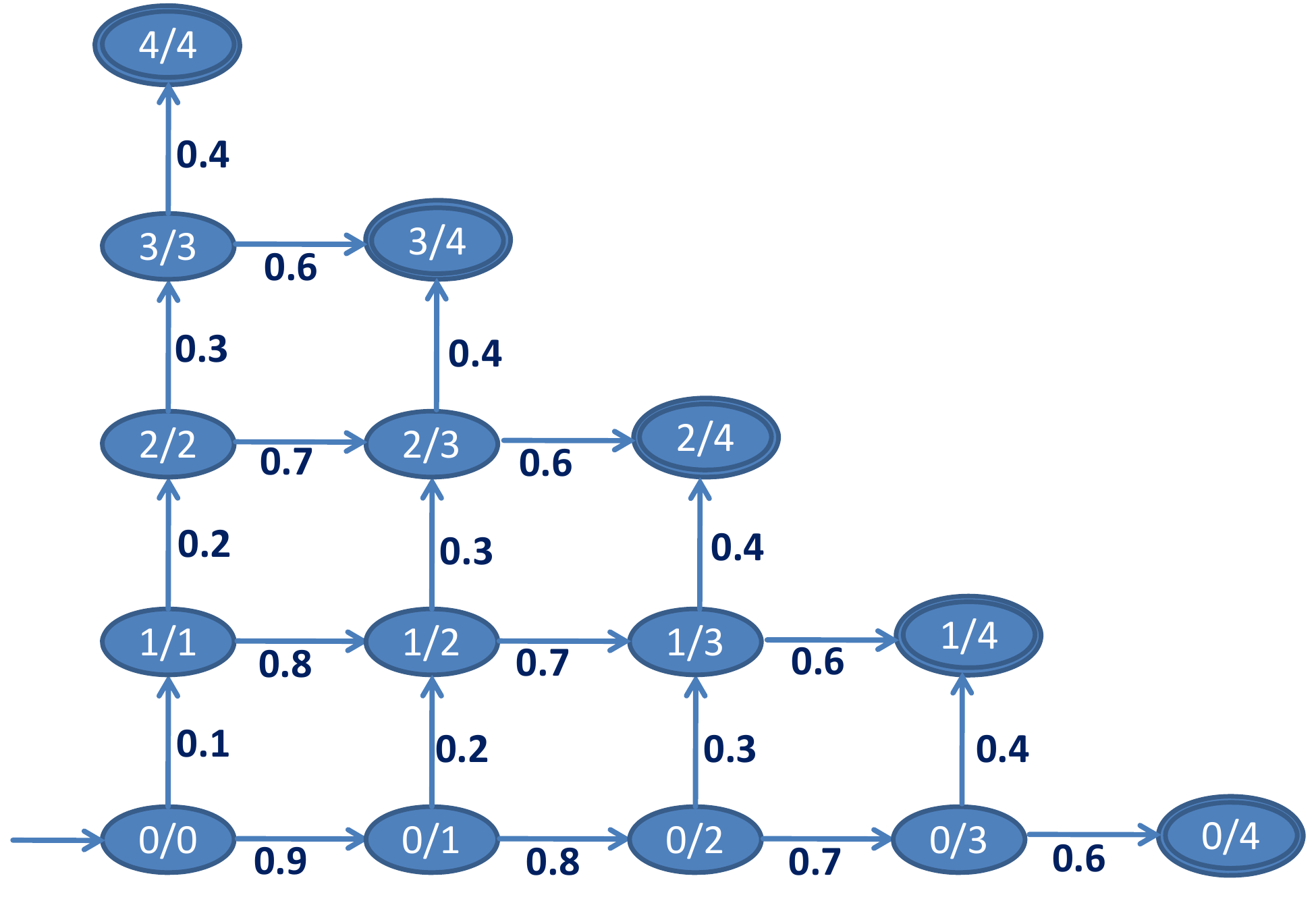}\vspace{-0.2cm}
    \caption{Deterministic finite automaton for four Bernoulli random variables.}\vspace{-0.2cm}
    \label{fig:PBR_automaton_example}
\end{figure}

\begin{example}\label{ex:pbr123}
Let $N=4$ and let $p_1=0.1$, $p_2=0.2$, $p_3=0.3$ and $p_4=0.4$.
The corresponding DFA is depicted in Figure
\ref{fig:PBR_automaton_example}. The probability of state (0/0) is
explicitly set to $1.0$ in Equation \ref{eq:pbr}. To compute the
probability of state (0/1), we apply Equation \ref{eq:pbr} to
compute
$$
P(0/1)= P(-1/0)\cdot p_1+P(0/0)\cdot (1-p_1).
$$
with $P(-1/0)=0$ and $P(0/0)=1$ explicitly defined in Equation
\ref{eq:pbr} this yields
$$
P(0/1)=0\cdot p_1 + 1\cdot (1-p_1)=0.9
$$
Analogously, we obtain
$$
P(1/1)=P(0/0)\cdot p_1 + P(1/0)\cdot (1-p_1)=1\cdot p_1=0.1
$$
Using these initial probabilities, we can continue to compute
$$
P(0/2)=P(-1/1)\cdot p_2 + P(0/1)\cdot (1-p_2)=0\cdot 0.2 +0.9\cdot
0.8=0.72
$$
$$
P(1/2)=P(0/1)\cdot p_2 + P(1/1)\cdot (1-p_2)=0.9\cdot 0.2+0.1\cdot
0.8= 0.26
$$
$$
P(2/2)=P(1/1)\cdot p_2 + P(2/1)\cdot (1-p_2)=0.1\cdot 0.2 + 0\cdot
0.8= 0.02
$$
The probabilities $P(i/2),0\leq i\leq 2$ can be used to compute
$$
P(0/3)=P(-1/2)\cdot p_3+P(0/2)\cdot (1-p_3)=0\cdot 0.3+0.72\cdot
0.7=0.504
$$
$$
P(1/3)=P(0/2)\cdot p_3+P(1/2)\cdot (1-p_3)=0.72\cdot 0.3+0.26\cdot
0.7=0.398
$$
$$
P(2/3)=P(1/2)\cdot p_3+P(2/2)\cdot (1-p_3)=0.26\cdot 0.3+0.02\cdot
0.7=0.092
$$
$$
P(3/3)=P(2/2)\cdot p_3+P(3/2)\cdot (1-p_3)=0.02\cdot 0.3+0\cdot
0.7=0.006
$$
Finally, these probabilities can be used to derive the final
distribution of the random variable $\sum_{i=1}^4 X_i$:
$$
P(0/4)=P(-1/3)\cdot p_4+P(0/3)\cdot (1-p_4)=0\cdot 0.4+0.504\cdot
0.6=0.3024
$$
$$
P(1/4)=P(0/3)\cdot p_4+P(1/3)\cdot (1-p_4)=0.504\cdot
0.4+0.398\cdot 0.6=0.4404
$$
$$
P(2/4)=P(1/3)\cdot p_4+P(2/3)\cdot (1-p_4)=0.398\cdot
0.4+0.092\cdot 0.6=0.2144
$$
$$
P(3/4)=P(2/3)\cdot p_4+P(3/3)\cdot (1-p_4)=0.092\cdot
0.4+0.006\cdot 0.6=0.0404
$$
$$
P(4/4)=P(3/3)\cdot p_4+P(4/3)\cdot (1-p_4)=0.006\cdot 0.4+0\cdot
0.6=0.0024
$$
These probabilities describe the PDF of $\sum_{i=1}^4 X_i$ by definition of $P(i/j)$.
\end{example}

\subsubsection*{Complexity Analysis} To compute the distribution of
$\sum_i X_i$ we require to compute each probability $P(i/j)$ for
$0\leq j\leq N,i\leq j$, yielding a total of $\frac{N^2}{2}\in
O(N^2)$ probability computations. To compute any such probability,
we have to evaluate Equation \ref{eq:pbr}, which requires to look
up four probabilities $P(i-1/j-1)$, $P(i/j-1)$, $p_j$ and $1-p_j$,
which can be performed in constant time. This yields a total
runtime complexity of $O(N^2)$. The $O(N^2)$ space complexity
required to store the matrix of probabilities $P(i/j)$ for $0\leq
j\leq N,i\leq j$ can be reduced to $O(N\cdot k)$ by exploiting that in
each iteration where the probabilities $P(i/k),0\leq i\leq k$ are
computed, only the probabilities $P(i/k-1),0\leq i\leq k-1$ are
required, and the result of previous iterations can be discarded.
Thus, at most $N$ probabilities have to be stored at a time.

\vspace{-0.3cm}
\subsection{Generating Functions}\vspace{-0.2cm} \label{paradigm:gfs} An
alternative technique to compute the sum of independent Bernoulli
variables is the generating functions technique. While showing the
same complexity as the Poisson binomial recurrence, its advantage
is its intuitiveness.

Represent each Bernoulli trial $X_i$ by a polynomial
$poly(X_i)=p_i\cdot x+(1-p_i)$. Consider the generating function
\begin{equation}\label{eq:genfkts}
\mathcal{F}^N=\prod_{i=1}^N poly(X_i)=\sum_{i=0}^{N} c_ix^i.
\end{equation}
The coefficient $c_i$ of $x^i$ in the expansion of $\mathcal{F}^N$
equals the probability $P(\sum_{n=1}^N X_n=i)$ (\cite{LiD09}). For
example, the monomial $0.25\cdot x^4$ implies that with a
probability of $0.25$, the sum of all Bernoulli random variables
equals four.

The expansion of $N$ polynomials, each containing two monomials
leads to a total of $2^N$ monomials, one monomial for each
sequence of successful and unsuccessful Bernoulli trials, i.e.,
one monomial for each possible worlds. To reduce this complexity,
again an iterative computation of $\mathcal{F}^N$, can be used, by
exploiting that
\begin{equation}\label{eq:genfkts2}
\mathcal{F}^k=\mathcal{F}^{k-1}\cdot poly(X_k).
\end{equation}
This rewriting of Equation \ref{eq:genfkts} allows to inductively
compute $\mathcal{F}^k$ from $\mathcal{F}^{k-1}$. The induction is
started by computing the polynomial $\mathcal{F}^0$, which is the
empty product which equals the neutral element of multiplication,
i.e., $\mathcal{F}^0=1$. To understand the semantics of this
polynomial, the polynomial $\mathcal{F}^0=1$ can be rewritten as
$\mathcal{F}^0=1\cdot x^0$, which we can interpret as the
following tautology:``with a probability of one, the sum of all
zero Bernoulli trials equals zero.'' After each iteration, we can
unify monomials having the same exponent, leading to a total of at
most $k+1$ monomials after each iteration. This unification step
allows to remove the combinatorial aspect of the problem, since
any monomial $x^i$ corresponds to a class of equivalent worlds,
such that this class contains only and all of the worlds where the
sum $\sum_{k=1}^N X_k=1$. In each iteration, the number of these
classes is $k$ and the probability of each class is given by the
coefficient of $x^i$.

An example showcasing the generating functions technique is given
in the following. This examples uses the identical Bernoulli
random variables used in Example \ref{ex:pbr123}.
\vspace{-0.1cm}
\begin{example}
Again, let $N=4$ and let $p_1=0.1$, $p_2=0.2$, $p_3=0.3$ and
$p_4=0.4$. We obtain the four generating polynomials
$poly(X_1)=(0.1x+0.9)$, $poly(X_2)=(0.2x+0.8)$,
$poly(X_3)=(0.3x+0.7)$, and $poly(X_4)=(0.4x+0.6)$. We trivially
obtain $\mathcal{F}^0=1$. Using Equation \ref{eq:genfkts2} we get\vspace{-0.1cm}
$$
\mathcal{F}^1=\mathcal{F}^{0}\cdot poly(X_1)=1\cdot (0.1x +
0.9)=0.1x + 0.9.
$$\vspace{-0.1cm}
Semantically, this polynomial implies that out of the first one
Bernoulli variables, the probability of having a sum of one is
$0.1$ (according to monomial $0.1x=0.1x^1$, and the probability of
having a sum of zero is $0.9$ (according to monomial $0.9=0.9x^0$.
Next, we compute $F^2$, again using Equation \ref{eq:genfkts2}:\vspace{-0.1cm}
$$
\mathcal{F}^2=\mathcal{F}^{1}\cdot poly(X_2)=(0.1x^1 +
0.9x^0)\cdot
(0.2x^1+0.8x^0)
=
$$
$$
0.02x^1x^1+0.08x^1x^0+0.18x^0x^1+0.72x^0x^0
$$\vspace{-0.1cm}
In this expansion, the monomials have deliberately not been
unified to give an intuition of how the generating function
techniques is able to identify and unify equivalent worlds. In the
above expansion, there is one monomial for each possible world.
For example, the monomial $0.18x^0x^1$ represents the world where
the first trial was unsuccessful (represented by the zero of the
first exponent) and the second trial was succesful (represented by
the one of the second exponent). The above notation allows to
identify the sequence of successful and unsuccessful Bernouli
trials, clearly leading to a total of $2^k$ possible worlds for
$\mathcal{F}^k$. However, we know that we only need to compute the
total number of successful trials, we do not need to know the
sequence of successful trials. Thus, we need to treat worlds
having the same number of successful Bernoulli trials
equivalently, to avoid the enumeration of an exponential number of
sequences. This is done implicitly by polynomial multiplication,
exploiting that\vspace{-0.1cm}
$$
0.02x^1x^1+0.08x^1x^0+0.18x^0x^1+0.72x^0x^0=0.02x^2+0.08x^1+0.18x^1+0.72x^0
$$\vspace{-0.1cm}
This representation no longer allows to distinguish the sequence
of successful Bernouli trials. This loss of information is
beneficial, as it allows to unify possible worlds having the same
sum of Bernoulli trials.
$$
0.02x^2+0.08x^1+0.18x^1+0.72x^0=0.02x^2+0.26x^1+0.72x^0
$$
The remaining monomials represent an equivalence class of possible
worlds. For example, monomial $0.26x^1$ represents all worlds
having a total of one successful Bernoulli trials. This is
evident, since the coefficient of this monomial was derived from
the sum of both worlds having a total of one successful Bernoulli
trials.In the next iteration, we compute:
$$
\mathcal{F}^3=\mathcal{F}^{2}\cdot
poly(X_3)=(0.02x^2+0.26x^1+0.72x^0)\cdot (0.3x+0.7)
$$
$$
=0.006x^2x^1+0.014x^2x^0+0.078x^1x^1+0.182x^1x^0+0.216x^0x^1+0.504x^0x^0
$$
This polynomial represents the three classes of possible worlds in
$\mathcal{F}^2$ combined with the two possible results of the
third Bernoulli trial, yielding a total of $3\dot2$ monomials.
Unification yields
$$
0.006x^2x^1+0.014x^2x^0+0.078x^1x^1+0.182x^1x^0+0.216x^0x^1+0.504x^0x^0=
$$
$$
0.006x^3+0.092x^2+0.398x^1+0.504
$$
The final generating function is given by
$$
\mathcal{F}^4=\mathcal{F}^{3}\cdot poly(X_4)=$$
$$(0.006x^3+0.092x^2+0.398x^1+0.504)\cdot (0.4x+0.6)=
$$
$$
.0024x^4+.0036x^3+.0368x^3+.0552x^2+.1592x^2+.2388x^1+.2016x^1+.3024x^0
$$
$$
=0.0024x^4+0.0404x^3+0.2144x^2+0.4404x+0.3024
$$
\end{example}
This polynomial describes the PDF of $\sum_{i=1}^4 X_i$, since
each monomial $c_ix^i$ implies that the probability, that out of
all four Bernoulli trials, the total number of successful events
equals $i$, is $c_i$. Thus, we get $P(\sum_{i=1}^4 X_i=0)=0.0024$,
$P(\sum_{i=1}^4 X_i=1)=0.0404$, $P(\sum_{i=1}^4 X_i=2)=0.2144$,
$P(\sum_{i=1}^4 X_i=3)=0.4404$ and $P(\sum_{i=1}^4 X_i=4)=0.3024$.
Note that this result equals the result we obtained by using the
Poisson binomial recurrence in the previous section.

\subsubsection*{Complexity Analysis}
The generating function technique requires a total of $N$
iterations. In each iteration $1\leq k \leq N$, a polynomial of
degree $k$, and thus of maximum length $k+1$, is multiplied with a
polynomial of degree $1$, thus having a length of $2$. This
requires to compute a total of $(k+1)\cdot 2$ monomials in each
iteration, each requiring a scalar multiplication. Thus leads to a
total time complexity of $\sum_{i=1}^N 2k+2\in O(N^2)$ for the
polynomial expansions. Unification of a polynomial of length $k$
can be done in $O(k)$ time, exploiting that the polynomials are
sorted by the exponent after expansion. Unification at each
iteration leads to a $O(n^2)$ complexity for the unification step.
This results in a total complexity of $O(n^2)$, similar to the
Poisson binomial recurrence approach.

An advantage of the generating function approach is that this
naive polynomial multiplication can be accelerated using Discrete
Fourier Transform (DFT). This technique allows to reduce to total
complexity of computing the sum of $N$ Bernoulli random variables
to $O(N log^2 N)$ (\cite{LiSahDes11}). This acceleration is
achieved by exploiting that DFT allows to expand two polynomials
of size $k$ in $O(k log k)$ time. Equi-sized polynomials are
obtained in the approach of \cite{LiSahDes11}, by using a divide
and conquer approach, that iteratively divides the set of $N$
Bernoulli trials into two equi-sized sets. Their recursive
algorithm then combines these results by performing a polynomial
multiplication of the generating polynomials of each set. More
details of this algorithm can be found in \cite{LiSahDes11}.

\section{Advanced Techniques for Managing Uncertain Spatial Data}\label{ref:overview}

\begin{table*}[t]
\centering
\caption{Advanced Topics in Querying and Mining Uncertain Spatial Data.\vspace{-0.1cm}}
\label{table:related_work}
\begin{tabular}{|l|l|c|c|c|c|}
\hline
\textbf{Topic}	& \textbf{Related Work} \\ \hline
Nearest Neighbor Query Processing  & \cite{ReyKalPra04,KriKunRen07,ChengCMC08,IjiIsh09,zhang2013voronoi,niedermayer2013probabilistic,schmid2017uncertain}  \\ \hline
$k$-Nearest Neighbor ($k$NN) Query Processing & \cite{kolahdouzan2004voronoi,BesSolIly08,CheCheCheXie09,bernecker2011novel} \\ \hline
Top-$k$ Query Processing & \cite{rds-07,Soliman07,YiLiKolSri08a} \\ \hline

Ranking of Uncertain Spatial Data & \cite{cl-08,BerKriRen08,CorLiYi09,SolIly09,LiSahDes09,bernecker2010scalable,dai2005probabilistic,LianChen09,bernecker2012probabilistic,hua2008ranking}  \\ \hline
Reverse $k$NN Query Processing  &\cite{LiaChe09,CheLinWanZhaPei10,bernecker2011efficient,emrich2014reverse}  \\ \hline
Skyline Query Processing & \cite{pei2007probabilistic,lian2008monochromatic,vu2013efficient,ding2014probabilistic,yang2018top} \\ \hline
Indexing Uncertain Spatial Data  & \cite{zhang2009effectively,emrich2012indexing,agarwal2009indexing}  \\ \hline
Maximum Range-Sum Query Processing & \cite{agarwal2018range,nakayama2017probabilistic,liu2019probabilistic}  \\ \hline
Querying Uncertain Trajectory Data & \cite{emrich2012querying,niedermayer2013similarity,zheng2011probabilistic} \\ \hline
Clustering Uncertain Spatial Data  &\cite{schubert2015framework,zufle2014representative,ngai2006efficient,kriegel2005density} \\ \hline
Frequent Itemset and Colocation Mining & \cite{bernecker2013model,wang2012efficient,bernecker2009probabilistic,bernecker2012probabilistic,wang2011finding} \\ \hline
\end{tabular}
\vspace{-0.2cm}
\end{table*}

The Paradigm of Equivalent worlds has been successfully applied to efficiently support many spatial query predicates and spatial data mining tasks. These more advanced techniques are out of scope of this book chapter, but the techniques presented in this chapter should help the interested reader to dive deeper into understanding state-of-the-art solutions, and to help the reader to contribute to this field.
An overview of research directions on uncertain spatial is provided in Table~\ref{table:related_work}.

Efficient solutions on uncertain data have been presented for ($1$)-nearest neighbor ($1$NN) queries~\cite{ReyKalPra04,KriKunRen07,ChengCMC08,IjiIsh09,zhang2013voronoi,niedermayer2013probabilistic,schmid2017uncertain}. The case of $1NN$ is special, as for $1NN$ the cases of object-based and result-based probabilistic result semantics are equivalent: Since a $1NN$ query only results a single result object. Thus, the probability of any object to be part of the result is equal the probability of this object to be the (whole) result. 
For $k$ Nearest Neighbor queries, this is not the case, as initially motivated in Figure~\ref{fig:toy}. 
For object-based result semantics (as explained in Section~\ref{subsec:ProbAnswerSem}), polynomial time solutions leveraging the paradigm of equivalent worlds have been proposed~\cite{bernecker2011novel}. For result-based result semantics, where each of the (potentially exponential many in $k$) results is associated with a probability, solutions have been presented in~\cite{BesSolIly08,CheCheCheXie09}. 

A related problem is Top-$k$ query processing which returns the $k$ best result objects for a given score function~\cite{rds-07,Soliman07,YiLiKolSri08a}. While these solution are not proposed in the context of spatial or spatio-temporal data, they are mentioned here as they can be applied to spatial data. For example, if the score function is defined as the distance to query object, this problem becomes equivalent to $k$NN. Solutions for result-based probabilistic result semantics are proposed in~\cite{Soliman07,rds-07} and for object-based result semantics in~\cite{YiLiKolSri08a}.

Another problem generalization are ranking queries, which return the Top-$k$ result ordered by score. For uncertain data using object-based result semantics, this yields a probabilistic mapping of each database mapping to each rank for the case of object-based result semantics. For example, it may return that object $o_1$ has a $80\%$ probability to be Rank 1, and a $20\%$ probability to be Rank 2. In the case of result-based probabilistic result semantics, each possible ranking of objects is mapped to a probability, for example, the ranking $[o_1,o_3,o_2]$ may have a $10\%$ probability.
Solutions for the result-based probabilistic result semantic case have been proposed in~\cite{SolIly09} having exponential run-time due to the hard nature of this problem. For the case of object-based probabilistic result semantics, first solutions having exponential run-time were proposed~\cite{BerKriRen08,cl-08}. Applying the paradigm of equivalent worlds, a number of solutions have been proposed concurrently and independently to achieve polynomial run-time (linear in the number of database objects times the number of ranks). The generating functions technique (as explained in Section~\ref{chap:sumofindependent}) was proposed for this purpose by Li et al.~\cite{LiSahDes09}. An equivalent approach using a technique called Poisson-Binomial Recurrence was simultaneously proposed by~\cite{bernecker2010scalable,hua2008ranking}. A comparison of the generating functions technique and the Poisson Binomial Recurrence, along with a proof of equivalence, can be found in~\cite{zufle2013similarity}. Other works shown in Table~\ref{table:related_work} include solutions for the case of existential uncertainty~\cite{dai2005probabilistic}, inverse ranking~\cite{LianChen09}, and spatially extended objects~\cite{bernecker2012probabilistic}, and the computation of the expected rank of an object.~\cite{CorLiYi09}.
Solution for indexing of uncertain spatial~\cite{agarwal2009indexing,chen2017indexing} and spatio-temporal~\cite{zhang2009effectively,emrich2012indexing} data have been proposed to speed up various of the previously mentioned query types.  

The problem of finding reverse $k$ nearest neighbors (R$k$NNs) have been studied for spatial data~\cite{LiaChe09,CheLinWanZhaPei10,bernecker2011efficient,CheLinWanZhaPei10} and spatio-temporal data~\cite{emrich2014reverse}. Solutions for skyline queries on uncertain data have been proposed in~\cite{pei2007probabilistic,lian2008monochromatic,vu2013efficient,ding2014probabilistic,yang2018top}.
More recently, the problem of answering Maximum Range-Sum Queries has been studied for uncertain data~\cite{agarwal2018range,nakayama2017probabilistic,liu2019probabilistic}.

Solutions tailored towards uncertain spatio-temporal trajectories, in which the exact location of an object at each point in time is a random variable have been proposed~\cite{emrich2012querying,niedermayer2013similarity,zheng2011probabilistic}. In this work, the challenge is to leverage stochastic processes that consider temporal dependencies. Such dependencies describe that the location of an object at a time $t$ depends on its location at time $t-1$.

Solutions for clustering uncertain data have been proposed~\cite{schubert2015framework,zufle2014representative,ngai2006efficient,kriegel2005density}. The challenge of clustering uncertain data is that the membership likelihood of on uncertain object to a cluster depends on other objects, making it hard to identify groups of worlds that are guaranteed to yields the same clustering result.

Finally, solutions for frequent itemset mining have been proposed for uncertain data~\cite{bernecker2013model,wang2012efficient,bernecker2009probabilistic,bernecker2012probabilistic}. While frequent itemset mining is not a spatial problem, it has applications in spatial co-location mining~\cite{wang2011finding,chan2019fraction}.

 Yet, many other spatial query predicates, as well as other probabilistic query predicates using different probabilistic result semantics are still open to study. The authors hopes that this chapter provides interested scholars with a starting point to fully understand preliminaries and assumptions made by existing work, as well as a general paradigm to develop efficient solutions for future work leveraging the Paradigm of Equivalent Worlds presented herein.

\vspace{-0.2cm}
\section{Summary}\label{sec:summary}
This chapter provided an overview of uncertain spatial data models and the concept of possible world semantics to interpret queries on these models. To understand the landscape of existing query processing algorithms on uncertain data, this chapter further surveyed different probabilistic result semantics and different probabilistic query predicates. To give the interested reader a start into this field, this chapter presented a general paradigm to efficiently query uncertain data based on the Paradigm of Equivalent Worlds, which aims at finding possible worlds that are guaranteed to have the same query result. As a case-study to apply this paradigm, this chapter provided solutions to efficiently compute range queries on uncertain data using an efficient recursion approach, as well as leveraging the concept of generating functions. 

Given this survey on modeling and querying uncertain spatial data, this chapter further provided a brief (and not exhaustive) overview of some research directions on uncertain spatial data. Many queries on uncertain data have already been solved efficiently, but many new challenges arise. For instance, only limited work has focused on streaming uncertain data, that is, handling uncertain data that changes rapidly. Another mostly open research direction is uncertain data processing in resources-limited scenarios such as edge computing. The author hopes that readers will find this overview useful to help readers understanding existing solutions and support readers towards adding their own research to this field. 

\bibliographystyle{acm}
\bibliography{abbreviations,bibliography,literature}

\end{document}